\documentclass[a4paper,11pt]{article}
\usepackage[utf8]{inputenc}
\usepackage[T1]{fontenc}
\usepackage{lmodern}
\usepackage{microtype}
\usepackage[a4paper,textwidth=16cm,textheight=22cm,centering]{geometry}

\usepackage{mathtools}
\usepackage{amssymb,amsfonts}
\usepackage{amsthm}
\usepackage{setspace}
\usepackage{indentfirst}
\usepackage{enumitem}
\usepackage{float}
\usepackage{algorithm}

\usepackage{cite}
\usepackage{xcolor}
\usepackage[ ]{hyperref}
\hypersetup{
	colorlinks=true,
	linktoc=all,
	linkcolor=red,
	citecolor=blue,
	urlcolor=black,}

\theoremstyle{plain}
\newtheorem{theorem}{Theorem}[section]
\newtheorem{lemma}[theorem]{Lemma}

\newtheorem{corollary}[theorem]{Corollary}

\newtheorem{conjecture}[theorem]{Conjecture}

\theoremstyle{definition}
\newtheorem{definition}[theorem]{Definition}
\newtheorem{problem}[theorem]{Problem}

\theoremstyle{remark}

\begin{document}
	\date{}
	\begin{spacing}{1.03}

\title{Polynomial Algorithms for Minimum Degree Partitions in Semicomplete Digraphs}

\author{
Hanzhi Bai$^{1}$\and
Jin Yan$^{1}$\thanks{Corresponding author. E-mail: {\tt yanj@sdu.edu.cn.} (J.Yan)
}}

\footnotetext[1]{{School of Mathematics, Shandong University, Jinan 250100, P.~R.~China.
Emails: {\tt bhz@mail.sdu.edu.cn} (H. Bai),
Supported by the National Natural Science Foundation of China (Grant No.~12571373) and Natural Science Foundation of Shandong Province (Grant No. ZR2025MS05)}}
\maketitle

\begin{abstract}
A 2-partition of a digraph is a partition of its vertex set into two nonempty parts. Degree-constrained 2-partition problems are generally computationally difficult, even when the prescribed properties are expressed only in terms of minimum indegree, minimum outdegree, or minimum semidegree.  Bang-Jensen and Christiansen~\cite{B-C} conjectured that the minimum-degree partition problems would be polynomial-time solvable on semicomplete digraphs when the degree thresholds are fixed, and Bang-Jensen and Gutin~\cite{B-G-Classes} posed the related Problems~2.8.15 and~2.8.16.

We resolve this conjecture. More precisely, for every fixed pair of integers $k_1,k_2\ge 2$, we give deterministic polynomial-time algorithms that decide whether a given semicomplete digraph admits a $(\delta^+\geq k_1,\delta^-\geq k_2)$-partition, a $(\delta^+\geq k_1,\delta^0\geq k_2)$-partition, or a $(\delta^0\geq k_1,\delta^0\geq k_2)$-partition, and construct such a partition whenever one exists. Here, $\delta ^+,\delta ^-,\delta ^0$ represent the minimum out-, in-, semi-degree, respectively. The algorithms use small degree certificates, minimal cores, closure and protective-set arguments, and deterministic universal colorings with monotone recoloring, which develop a new method in partition algorithm construction.
\end{abstract}

\noindent\textbf{Keywords:} semicomplete digraphs; $2$-partitions; polynomial-time algorithms

\noindent\textbf{Mathematics Subject Classification:} 05C20, 05C85, 68Q25.

\section{Introduction}




A \emph{2-partition} of a graph or digraph \(G\) is a partition of its
vertex set into two disjoint nonempty subsets \(V_1\) and \(V_2\); that is, \(V(G)=V_1\cup V_2\).
Let \(P_1\) and \(P_2\) be two prescribed properties of graphs or
digraphs. We say that \(G\) admits a \((P_1,P_2)\)-partition if there
exists a 2-partition \((V_1,V_2)\) of \(V(G)\) such that \(G[V_1]\)
satisfies \(P_1\) and \(G[V_2]\) satisfies \(P_2\). Typically, \(P_1\)
and \(P_2\) may involve connectivity requirements, degree constraints,
the existence or exclusion of certain subgraphs, or other structural
conditions.

The study of graph 2-partitions is an important topic in graph theory
and has attracted considerable attention. Degree constraints have long played a central role in the study of graph partitions, scholars have made contributions in this field. The most classic result is the paper by Stiebitz in 1996~\cite{Stiebitz-1996}. He proved that for positive integers $s$ and $t$, every graph $G$ with the minimum degree $\delta(G)\ge s+t+1$ admits a $(\delta\ge s,\delta\ge t)$-partition. Recently, Wang and Wu~\cite{W-W} proved the average-degree partition theorem conjectured by Cs\'oka, Lo, Norin, Wu, and Yepremyan, while Ma and Yu~\cite{M-Y-2016} settled a conjecture of Bollob\'as and Scott on sparse bipartitions. Further results on graph 2-partitions can be found in~\cite{J-M-Y-Y,M-Y-2019,F-M,L-X,K-O-Z,W-W-L,L-S-S-Z}.


For digraphs, research on degree constrained partition problems is also meaningful. For a digraph $D$, we denote that $\delta^+(D)=\min_{v\in V(D)}d_H^+(v)$, $\delta^-(D)=\min_{v\in V(D)}d_H^-(v)$ and $\delta^0(D)=\min\{\delta^+(D),\delta^-(D)\}$. In~\cite{Alon1996} by Alon in 1996 and~\cite{Stiebitz-1995} by Stiebitz in 1995, they ask, separately, whether for every pair of positive integers \(k_1,k_2\), there is a function \(f(k_1,k_2)\) such that every digraph with minimum outdegree at least \(f(k_1,k_2)\) admits a \((\delta^+\ge k_1,\delta^+\ge k_2)\)-partition. In fact, as early as 1983, Thomassen~\cite{Thomassen-1983} proved that every digraph $D$ with $\delta^+ (D)\ge 3$ yields a $(\delta^+ \ge 1,\delta^+ \ge 1)$-partition, which means that $f(1,1)=3$. But for larger $k_1,k_2$, even the existence of $f(1, 2)$ is still open. Recently, Steiner et al.\cite{ChristophPetrovaSteiner2025} proved that if $f(2,2)$ exists, then all the numbers $f(k_1,k_2)$ with $k_1,k_2 \ge 1$ exist, which has greatly advanced research on this conjecture.

However, from an algorithmic perspective, this conjecture holds true for a special class of digraphs: semicomplete digraphs. Here, a digraph \(S\) is said to be \emph{semicomplete} if, for every pair of distinct vertices \(u,v\in V(S)\), at least one of \(uv\) and \(vu\) is an arc of \(S\). In 2018, Bang-Jensen and Christiansen~\cite{B-C} proposed a polynomial-time algorithm deciding that for a given semicomplete digraph \(S\), whether $S$ has a $(\delta^+\ge k_1,\delta^+\ge k_2)$-partition for $k_1,k_2\ge 1$. Moreover, they generalized this conclusion to a more general case. They proved that each of the following problems can be decided in polynomial time:
\begin{itemize}
    \item whether $S$ has a $(\delta^+\ge 1,\delta^-\ge k)$-partition for $k\ge 1$;
    \item whether $S$ has a $(\delta^+\ge 1,\delta^0\ge 1)$-partition;
    \item whether $S$ has a $(\delta^0\ge 1,\delta^0\ge 1)$-partition.
\end{itemize}

Since these results do not resolve the corresponding partition problems
under arbitrary minimum-degree requirements, Bang-Jensen and
Christiansen proposed the following conjecture:
\begin{conjecture}\label{conj1}
For every fixed pair of integers $k_1,k_2\ge 2$, each of the following problems can be decided in polynomial time for a given semicomplete digraph $S$:
\begin{itemize}
    \item whether $S$ has a $(\delta^+\ge k_1,\delta^-\ge k_2)$-partition;
    \item whether $S$ has a $(\delta^+\ge k_1,\delta^0\ge k_2)$-partition;
    \item whether $S$ has a $(\delta^0\ge k_1,\delta^0\ge k_2)$-partition.
\end{itemize}
\end{conjecture}

In particular, Bang-Jensen and Gutin also raise the same problem in~\cite{B-G-Classes}, Problems~2.8.15 and~2.8.16. Conjecture~\ref{conj1} consists precisely of the symmetry-inequivalent cases left open in~\cite{B-C}: taking converses swaps $\delta^+$ and $\delta^-$, while exchanging parts swaps the two conditions. Together with the known $(\delta^+\ge k_1,\delta^+\ge k_2)$ case, it completes the polynomial-time classification for fixed $k_1,k_2\ge2$.

We resolve Conjecture~\ref{conj1} as our main result in the following theorem.

\begin{theorem}\label{main theorem}
For every fixed pair of integers $k_1,k_2\ge 2$, each of the following problems can be decided in polynomial time for a given semicomplete digraph $S$:
\begin{itemize}
    \item whether $S$ has a $(\delta^+\ge k_1,\delta^-\ge k_2)$-partition;
    \item whether $S$ has a $(\delta^+\ge k_1,\delta^0\ge k_2)$-partition;
    \item whether $S$ has a $(\delta^0\ge k_1,\delta^0\ge k_2)$-partition.
\end{itemize}
In each case, a required partition can also be constructed in polynomial time whenever one exists.
\end{theorem}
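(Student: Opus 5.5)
The plan follows the pattern suggested in the abstract: reduce the search for a partition $(V_1,V_2)$ to guessing a small ``certificate'' of bounded size, and then extend that guess deterministically. The key observation is that a semicomplete digraph on $n$ vertices has a vertex of outdegree at least $(n-1)/2$, and more strongly every induced subdigraph $S[X]$ has vertices of large in- and outdegree. Hence the degree constraints are \emph{tight} only at few vertices. First, I will show that if $S$ admits, say, a $(\delta^+\ge k_1,\delta^-\ge k_2)$-partition, then it admits one in which one part is small, or contains a small \emph{core}. A core is a set $C$ of size bounded by a function $g(k_1,k_2)$ with $\delta^+(S[C])\ge k_1$ (respectively $\delta^-$ or $\delta^0$). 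Such a set exists: a minimal subset with the required minimum degree in a semicomplete digraph has bounded size, since any semicomplete digraph with minimum outdegree $k$ contains a subset of size $O(k)$ (or $2^{O(k)}$) whose induced minimum outdegree is still $k$. I would prove this by repeatedly taking a vertex of maximum indegree and its in-neighbourhood, which is a standard domination-type argument in tournaments. Enumerating all candidate cores $C_1$ and $C_2$ takes $n^{O(g)}$ time, which is polynomial for fixed $k_1,k_2$.

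Second, given disjoint cores $C_1\subseteq V_1$ and $C_2\subseteq V_2$, I would assign the remaining vertices via a closure procedure. For the outdegree side, a vertex $v$ may join $V_1$ as long as it has at least $k_1$ out-neighbours in the current $V_1$. Adding vertices to $V_1$ never hurts the outdegree condition inside $V_1$, so this condition is monotone. Similarly, membership in $V_2$ under an indegree condition is monotone in $V_2$. The conflict is that moving a vertex into one part removes it from the other, which may destroy degrees there. To handle this I would introduce \emph{protective sets}: small sets $P_i\supseteq C_i$ which are also guessed and which guarantee the degree requirement for every vertex that will ever need it. Each such vertex needs only $k_i$ witnesses, and by the high-degree structure of semicomplete digraphs, a bounded number of vertices serve as witnesses for all but a bounded number of vertices. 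With the $P_i$ fixed, the remaining vertices form a free region whose assignment is governed by a monotone recoloring process. Starting from all free vertices in $V_2$, repeatedly move to $V_1$ any vertex violating the $V_2$ condition, and check at termination whether the $V_1$ condition holds. This recoloring runs in polynomial time, and monotonicity gives that if some valid partition extends the guess, the process finds one.

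The three cases are treated with the same template. The $\delta^0$ constraints require both in- and out-witnesses, so the cores and protective sets are guessed for both directions. Converse and symmetry reduce the remaining combinations, as noted after Conjecture~\ref{conj1}. Derandomization of the guessing of which bounded-size sets play which role uses deterministic universal colorings (splitters / perfect hash families of size $n^{O(1)}$ for fixed $k$). These replace random colour-coding when we need to separate many small witness sets simultaneously.

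The main obstacle is the second step's correctness proof in the $\delta^0$ cases. There, the constraints on a part are not monotone in either direction of enlargement: removing vertices can kill both in- and outdegrees, and vertices in the ``middle'' of a near-transitive tournament may need witnesses on both sides. I would first try to prove a structural lemma. It would say that in any valid partition, all but $h(k_1,k_2)$ vertices have at least $k_1+k_2$ out- and in-neighbours inside a fixed guessed set, so they are safely placeable in either part. This would reduce the genuinely constrained vertices to a bounded set that can be brute-forced.
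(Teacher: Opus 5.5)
There is a genuine gap, and it sits exactly at the step that carries your argument. The claim that ``a bounded number of vertices serve as witnesses for all but a bounded number of vertices'' is false. Take a random tournament on $n$ vertices; for fixed $k_1,k_2$ it has valid partitions of all three types. With high probability, every set $W$ of bounded size $s$ leaves $\Omega(n2^{-s})$ vertices with no in-neighbour in $W$. So a set supplying $k$ in- and out-neighbours to all but $O(1)$ vertices needs $\Omega(\log n)$ vertices; the paper's Lemmas~\ref{lem:one-sided-domination} and~\ref{lem:small-support} give the matching upper bound $2k(2k+1)\lceil\log(n+1)\rceil$. This refutes the structural lemma you propose for the $\delta^0$ case.

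It also leaves ``monotonicity gives that if some valid partition extends the guess, the process finds one'' without support. Peeling $V(D)\setminus P_1$ to its maximum core for the second condition only yields $V_2'\supseteq V_2$ and $V_1'\subseteq V_1$. The set $V_1'$ can still fail the first condition, because vertices of $V_1$ that survive in $V_2'$ may be needed as out-neighbours. Genuine witness sets would repair this, but at logarithmic size they cannot be enumerated in polynomial time. Relatedly, splitters ``for fixed $k$'' buy nothing, since bounded sets can simply be brute-forced; universal families only matter at size $\Theta(\log n)$.

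The paper handles the first two cases with bounded guesses that are not witness sets, via an extremal partition and an exchange argument.
\begin{itemize}
\item For $(\delta^+,\delta^-)$ it takes $\lvert X\rvert$ minimum and guesses $E$, the union of the out-neighbourhoods in $X$ of the at most $2k_1+1$ vertices of outdegree exactly $k_1$ in $X$. Minimality forces each $x\in X\setminus E$ to have fewer than $k_2$ in-neighbours in $Y$. Hence $\operatorname{cl}_{k_2,E}(B)=Y$ for a minimal in-core $B\subseteq Y$ of bounded size (Lemma~\ref{lem:minimal-core}).
\item For $(\delta^+,\delta^0)$ it takes $\lvert P\rvert$ maximum and guesses the protective set $Z_Q(h)$ of Lemma~\ref{lem:protective-set}, which keeps $Q$ valid after removing any at most $h$ vertices outside $Z_Q(h)$. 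A small out-witness through any extra vertex (Lemma~\ref{lem:small-out-witness}) would then contradict maximality, so $K_{k_1}(V(D)\setminus Z)=P$.
\item For $(\delta^0,\delta^0)$ the paper never enumerates the logarithmic support sets. It only needs one colouring that agrees with the target on them, taken from a $t$-universal family with $t=O(\log n)$ and size $16n^2 4^t=n^{O(1)}$. It then repeatedly flips the first bad vertex. A vertex already in its target part is never bad, because its part's support set stays in that part. So every flip moves a vertex into its target part, and the run ends within $n$ flips.
\end{itemize}

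Incidentally, the non-monotonicity you worry about for $\delta^0$ is not an obstacle. The property $\delta^0\ge k$ is preserved under unions, so maximum $\delta^0$-cores exist and are found by peeling. Also, an inclusion-minimal set with $\delta^0\ge k$ through a given vertex has at most $1+2k(2k+1)$ vertices. With these facts the $(\delta^+,\delta^0)$ exchange argument runs verbatim with $\delta^0$ on both sides. What your proposal lacks in every case is this extremal choice together with the exchange argument, not a bounded witness lemma.
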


The main novelty of this paper is that it avoids a direct search over all partitions. Instead, it uses the special structure of semicomplete digraphs to identify small guiding sets from which the desired partition can be recovered. In the third case, the paper further develops an algebraic construction of a universal coloring family, which replaces random guessing by a deterministic set of colorings guaranteed to contain a useful initial coloring. Combined with a monotone recoloring procedure, this yields constructive polynomial-time algorithms for all the remaining fixed-threshold cases.

Combining Theorem~\ref{main theorem} with the earlier $(\delta^+,\delta^+)$ result, gives the following uniform consequence and makes the scope of the main result explicit.

\begin{corollary}\label{cor:all-degree-types}
Let $k_1,k_2\ge 2$ be fixed, and let $\sigma_1,\sigma_2\in\{+,-,0\}$. There is a polynomial-time algorithm that, given a semicomplete digraph $S$, decides whether $S$ has a $(\delta^{\sigma_1}\ge k_1,\delta^{\sigma_2}\ge k_2)$-partition and constructs such a partition whenever one exists.
\end{corollary}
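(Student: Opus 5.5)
The corollary follows from Theorem~\ref{main theorem} and the earlier $(\delta^+\ge k_1,\delta^+\ge k_2)$ result of Bang-Jensen and Christiansen~\cite{B-C} by a symmetry reduction. There are nine ordered pairs $(\sigma_1,\sigma_2)\in\{+,-,0\}^2$. Two operations preserve both semicompleteness and solvability in polynomial time. The first is exchanging the two parts, which maps a $(\delta^{\sigma_1}\ge k_1,\delta^{\sigma_2}\ge k_2)$-partition of $S$ to a $(\delta^{\sigma_2}\ge k_2,\delta^{\sigma_1}\ge k_1)$-partition. The second is passing to the converse $S^{\mathrm{rev}}$, obtained by reversing every arc. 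In the converse, outdegrees and indegrees are interchanged inside every induced subdigraph, so $+\leftrightarrow -$ and $0$ is fixed, while the vertex partition is unchanged. Both operations take linear time, are involutions, and $S^{\mathrm{rev}}$ is again semicomplete.

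Next I check that every type is reduced to a solved case, keeping track of thresholds since the parameters are arbitrary fixed integers $\ge 2$.

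\begin{itemize}
\item $(+,+)$ is the result of~\cite{B-C}, which holds for all $k_1,k_2\ge1$.
\item $(-,-)$ reduces to $(+,+)$ via the converse.
\item $(+,-)$ is the first item of Theorem~\ref{main theorem}.
\item $(-,+)$ reduces to $(+,-)$ by swapping parts, which also swaps the thresholds to $(k_2,k_1)$. This is allowed because the theorem holds for every fixed pair of integers $\ge2$.
\item $(+,0)$ is the second item of Theorem~\ref{main theorem}.
\item $(0,+)$ reduces to $(+,0)$ by swapping parts.
\item $(-,0)$ reduces to $(+,0)$ via the converse.
\item $(0,-)$ reduces to $(+,0)$ by the converse followed by a part swap.
\item $(0,0)$ is the third item of Theorem~\ref{main theorem}.
\end{itemize}

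For each type, the algorithm applies the appropriate transformations to $S$ and, where needed, exchanges $k_1$ and $k_2$. It then runs the algorithm from Theorem~\ref{main theorem} or~\cite{B-C} on the transformed instance. If that algorithm returns a partition $(V_1,V_2)$, the algorithm maps it back by undoing any part swap; the converse leaves the vertex sets unchanged. Correctness holds because each transformation is a bijection between the required partitions of the original and transformed instances. The running time is polynomial because the transformations are linear.

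The only step needing care is the constructive part of the $(+,+)$ case. If~\cite{B-C} is stated only as a decision procedure, I would recover a partition by the standard self-reduction. Since $k_1,k_2$ are fixed, each part has at least $\max(k_1,k_2)+1$ vertices, so I would guess a small seed pair of vertex sets, one per part, and then fix vertices to sides one at a time, keeping only choices for which a decision call still answers yes. For this to work, the decision procedure must handle such prescribed vertices. If it does not, I would rely instead on the constructive version that the earlier algorithm provides, as the paper implicitly assumes when it states the corollary. This is the main potential obstacle; everything else is bookkeeping.
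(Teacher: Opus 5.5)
Your proposal is correct and follows the paper's route. The paper likewise notes that taking converses swaps $\delta^+$ and $\delta^-$ (and fixes $\delta^0$), while exchanging parts swaps the two conditions, so the nine types reduce to the three cases of Theorem~\ref{main theorem} plus the known $(\delta^+\ge k_1,\delta^+\ge k_2)$ algorithm of Bang-Jensen and Christiansen, which the paper also treats as constructive. Your closing self-reduction remark is therefore unnecessary, and it contains a small slip: part $V_i$ is only guaranteed $k_i+1$ vertices, not $\max(k_1,k_2)+1$.
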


Consequently, Theorem~\ref{main theorem} completes the polynomial-time classification of all nine ordered combinations of minimum outdegree, minimum indegree, and minimum semidegree constraints on the two parts when the two thresholds are fixed integers at least two. Since every tournament is semicomplete, the same conclusion holds in particular for tournaments.

\subsection*{Organization of the paper}

In Section~\ref{section 2}, we collect the notation and preliminary definitions. In Section~\ref{section 3}, we prove the main result. We use Theorems~\ref{thm:main},~\ref{thm:semi-out-degree-partition}, and~\ref{thm:fixed-semidegree-partition} to prove the three parts of Theorem~\ref{main theorem}. In Section~\ref{section 4}, we give concluding remarks and pose an open problem. Finally, in Appendix~\ref{app:universal-coloring-calculations}, we provide the detailed calculations used in the proof of Lemma~\ref{lem:universal-coloring}.

\section{Preliminaries}\label{section 2}

All digraphs in this paper are finite and loopless. For a digraph
$D=(V(D),A(D))$, its order is $n=|V(D)|$. For $X\subseteq V(D)$, the
induced subdigraph $D[X]$ has vertex set $X$ and arc set
$A(D)\cap(X\times X)$; the converse digraph $\overleftarrow D$ is obtained
by reversing every arc. The digraph $D$ is \emph{semicomplete} if, for
every pair of distinct vertices $u,v$, at least one of $uv$ and $vu$ is an
arc; both arcs are allowed. We use
\(
 N_D^+(v)=\{u:vu\in A(D)\},~
 N_D^-(v)=\{u:uv\in A(D)\},
\)
and, for $X\subseteq V(D)$, write $N_X^\pm(v)=N_D^\pm(v)\cap X$ and
$d_X^\pm(v)=|N_X^\pm(v)|$. For a nonempty digraph $H$, let
$\delta^\pm(H)=\min_{v\in V(H)}d_H^\pm(v)$ and
$\delta^0(H)=\min\{\delta^+(H),\delta^-(H)\}$; the latter is the
\emph{minimum semidegree}. An \emph{ordered $2$-partition} of $D$ is a pair $(V_1,V_2)$ of disjoint
nonempty sets with $V_1\cup V_2=V(D)$. For $U\subseteq V(D)$, let $K_a(U)$ be the unique
maximum set $S\subseteq U$ satisfying $\delta^+(D[S])\ge a$, and put
$K_a(U)=\varnothing$ if no such nonempty set exists. For
$\sigma,\tau\in\{+,-,0\}$, it is a
$(\delta^\sigma\!\ge a,\delta^\tau\!\ge b)$-partition if
$\delta^\sigma(D[V_1])\ge a$ and $\delta^\tau(D[V_2])\ge b$. These conventions follow standard digraph
terminology~\cite{B-G}. We
write $[n]=\{1,\ldots,n\}$, all logarithms are to base $2$, and
$\lceil x\rceil$ denotes the ceiling of $x$.

Next, we provide the definitions of \emph{$b$-core} and \emph{closure}:
\begin{definition}[Vertex-minimal indegree $b$-core]\label{def:vertex-minimal-indegree-core}
A nonempty set $B\subseteq V(D)$ is a \emph{vertex-minimal indegree
$b$-core} if $\delta^-(D[B])\ge b$ but
$\delta^-(D[B\setminus\{x\}])<b$ for every $x\in B$.
\end{definition}

\begin{definition}[The closure $\operatorname{cl}_{b,E}(B)$]\label{def:indegree-closure}
If $E\cap B=\varnothing$ and $\delta^-(D[B])\ge b$, set $S_0=B$ and, for
$j\ge0$, define
\(S_{j+1}=S_j\cup\{v\in V(D)\setminus(E\cup S_j):d_{S_j}^-(v)\ge b\}.\)
The increasing sequence stabilizes after at most $\lvert V(D)\rvert$ steps;
its stable value is denoted by $\operatorname{cl}_{b,E}(B)$.
\end{definition}

We finally record the algebraic and probabilistic notation used in
Lemma~\ref{lem:universal-coloring} and Appendix~\ref{app:universal-coloring-calculations}. Let $\mathbb F_2=\{0,1\}$ and
$\mathbb F=\operatorname{GF}(2^r)$, viewed as an $r$-dimensional vector
space over $\mathbb F_2$. A monic polynomial has leading coefficient $1$,
and it is irreducible if it has no nonconstant proper factor; a monic
irreducible polynomial $f$ of degree $r$ yields the representation
$\mathbb F\cong\mathbb F_2[Z]/(f)$. We use the facts that multiplication
by a nonzero field element is bijective, every nonzero
$\mathbb F_2$-linear functional $\lambda:\mathbb F\to\mathbb F_2$ has
kernel size $2^{r-1}$, and a nonzero polynomial of degree $d$ has at most
$d$ roots~\cite{L-N}. The number-theoretic M\"obius
function satisfies $\mu(1)=1$, $\mu(m)=(-1)^s$ when $m$ is a product of
$s$ distinct primes, and $\mu(m)=0$ otherwise. The vector $e_i$ is the
$i$th standard basis vector of $\mathbb F_2^n$.

An \emph{indexed multiset} $(x_\alpha)_{\alpha\in I}$ may contain equal
objects under different indices; uniform sampling is over $I$, so
repetitions retain their multiplicities. An indexed family
$\mathcal U_{n,t}=(\chi_\alpha)_{\alpha\in I}$ of maps
$\chi_\alpha:[n]\to\{0,1\}$ is \emph{$t$-universal} if every map
$\eta:T\to\{0,1\}$ with $|T|\le t$ equals $\chi_\alpha|_T$ for some
$\alpha$. We write $\mathbf 1_{\mathcal E}$ for the indicator of an event,
and $\mathbb E$ and $\Pr$ for expectation and probability over the stated
indexed multiset. For $R\subseteq[n]$, the Walsh character is
$\psi_R(\chi)=(-1)^{\sum_{i\in R}\chi(i)}$; a family is
$\varepsilon$-biased if $|\mathbb E\psi_R|\le\varepsilon$ for every
nonempty $R$. The identity
\[
 \mathbf 1_{\{\chi|_T=\eta\}}
 =2^{-|T|}\sum_{R\subseteq T}
 (-1)^{\sum_{i\in R}\eta(i)}\psi_R(\chi)
\]
is the Walsh--Fourier expansion used in Appendix~\ref{app:universal-coloring-calculations}; see
\cite{O} for background.

\section{Algorithmic Constructions and Correctness Proofs}\label{section 3}

\subsection{Algorithm for $(\delta^+\ge k_1,\delta^-\ge k_2)$-partition}

\begin{theorem}
\label{thm:main}
For every fixed pair of positive integers $k_1,k_2$, Algorithm 1 described
below correctly decides whether a semicomplete digraph $D$ has a
$(\delta^+\ge k_1,\delta^-\ge k_2)$-partition and constructs such a
partition whenever one exists. Moreover, for $n=\lvert V(D)\rvert$, its
running time is $O\bigl(n^{k_1(2k_1+1)+k_2(2k_2+1)+2}\bigr)$.
\end{theorem}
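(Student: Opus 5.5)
We will design Algorithm~1 around small certificates on both sides, guided by the running time $O\bigl(n^{k_1(2k_1+1)+k_2(2k_2+1)+2}\bigr)$: we guess a set $A$ of at most $k_1(2k_1+1)$ vertices meant to lie in $V_1$, and a set $B$ of at most $k_2(2k_2+1)$ vertices meant to lie in $V_2$. Then we grow both sides greedily in $O(n^2)$ time.

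\textbf{Step 1 (small certificates).} Suppose $(V_1,V_2)$ is a $(\delta^+\ge k_1,\delta^-\ge k_2)$-partition. We first show that $V_2$ contains a vertex-minimal indegree $k_2$-core $B$ (Definition~\ref{def:vertex-minimal-indegree-core}) with $|B|\le k_2(2k_2+1)$. The argument uses semicompleteness. Inside $D[V_2]$ there is a vertex $w$ of maximum indegree, and a set of at most $2k_2+1$ vertices containing a $k_2$-in-dominating structure. Then $k_2$ rounds of choosing in-neighbours, each round adding at most $2k_2+1$ vertices, produce a set in which every vertex has indegree at least $k_2$. The count uses the fact that a semicomplete digraph on $2k+1$ vertices already has a vertex of in- and outdegree at least $k$, via the standard averaging bound $\sum d^-=\binom{m}{2}$ or more. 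Symmetrically, applying the same argument to the converse gives a set $A\subseteq V_1$ with $|A|\le k_1(2k_1+1)$ and $\delta^+(D[A])\ge k_1$.

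\textbf{Step 2 (canonical extension).} Given a guessed pair $(A,B)$ with $A\cap B=\varnothing$, we let $V_2'=\operatorname{cl}_{k_2,A}(B)$ (Definition~\ref{def:indegree-closure}). This closure is the smallest set containing $B$, avoiding $A$, and closed under adding vertices with $k_2$ in-neighbours already present. We then set $V_1'=V(D)\setminus V_2'$ and check whether $\delta^+(D[V_1'])\ge k_1$.

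If the check fails, we refine: we replace $V_1'$ by $K_{k_1}(V_1')$ and move the deleted vertices to $V_2$. Moving vertices into $V_2$ never decreases in-degrees there, so $\delta^-(D[V_2])\ge k_2$ is preserved. The step fails only if $K_{k_1}(V_1')$ loses a vertex of $A$; since $\delta^+(D[A])\ge k_1$, we have $A\subseteq K_{k_1}(V_1')$ automatically, so this cannot happen. In fact this shows the simpler rule suffices: output $(K_{k_1}(V(D)\setminus V_2'),\,V(D)\setminus K_{k_1}(\cdot))$. This set is nonempty because it contains $A$, and its complement contains $B$.

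\textbf{Step 3 (correctness and time).} Soundness is immediate from the monotonicity remarks above. For completeness, we take the true partition and the certificates $A\subseteq V_1$, $B\subseteq V_2$ from Step~1. In fact only $B$ matters for $V_2$, and only $A$ matters to guarantee that the output is nonempty. The number of guesses is at most $n^{|A|+|B|}$, and each closure and core computation takes $O(n^2)$ time, which gives the stated bound.

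The main obstacle is Step~1: bounding the size of a minimal core by $k(2k+1)$ in semicomplete digraphs. I would first try the following greedy argument. Repeatedly pick a vertex of maximum in-degree within the current deficient set, and add $k$ of its in-neighbours from $V_2$, choosing them among the top in-degree vertices of a $(2k+1)$-subtournament. Then I would bound the number of rounds by $k$, using the fact that each added batch contains a vertex whose in-neighbourhood already covers the deficiency.
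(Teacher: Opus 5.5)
Your proposal has a genuine gap: the soundness argument in Step~2 is false, and completeness is never actually argued.

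\textbf{Soundness.} Moving the vertices of $V_1'\setminus K_{k_1}(V_1')$ into the second part keeps the indegrees of the vertices already in $V_2'$, but it says nothing about the moved vertices themselves. In fact these vertices lie outside $A\cup V_2'$, and $V_2'$ is a stable closure, so each of them has \emph{fewer} than $k_2$ in-neighbours in $V_2'$. Their indegree must therefore come entirely from the other moved vertices.

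Here is a concrete failure with $k_1=k_2=1$. Take the tournament on $\{a_1,a_2,a_3,b_1,b_2,b_3,u\}$ with $3$-cycles $a_1a_2a_3$ and $b_1b_2b_3$, all arcs from the $a_i$ to the $b_j$ and to $u$, and all arcs from $u$ to the $b_j$. For the guess $A=\{a_1,a_2,a_3\}$, $B=\{b_1,b_2,b_3\}$, your closure is just $B$. Then $K_1(A\cup\{u\})=A$, and your rule outputs $(A,B\cup\{u\})$, in which $u$ has indegree $0$. More generally, for $k_1=k_2=1$ the last vertex peeled off by $K_1$ always has indegree $0$ in the new second part, so the refinement never helps.

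\textbf{Completeness.} You therefore need an explicit final check. Once you add it, your completeness paragraph (``only $B$ matters for $V_2$, only $A$ for nonemptiness'') proves nothing. The closure $\operatorname{cl}_{k_2,A}(B)$ excludes only $A$, so it can absorb vertices of $V_1\setminus A$. Also, an arbitrary small core $B\subseteq V_2$ need not absorb all of $V_2$.

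\textbf{The missing idea.} What is missing is an extremal choice combined with a blocking set. The paper takes a feasible partition $(X,Y)$ with $\lvert X\rvert$ minimum. It sets $C=\{u\in X:d_X^+(u)=k_1\}$ and $E=\bigcup_{u\in C}N_X^+(u)$, so $\lvert E\rvert\le k_1(2k_1+1)$ by Lemma~\ref{lem:few-low-degree}.
\begin{itemize}
\item Deleting any $x\in X\setminus E$ keeps $\delta^+(D[X])\ge k_1$. By minimality, every such $x$ therefore satisfies $d_Y^-(x)<k_2$.
\item Hence the closure that excludes $E$ can never enter $X$. Note that $E$ need not be an out-core, unlike your $A$.
\item Choosing $B$ by peeling $Y$ one vertex at a time and reversing the order makes the closure absorb all of $Y$.
\item So $\operatorname{cl}_{k_2,E}(B)=Y$ exactly, $V(D)\setminus Y=X$ passes the test, and no $K_{k_1}$ trimming is needed.
\end{itemize}

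\textbf{Step~1.} Your Step~1 also does not establish the size bound. The greedy procedure is not well defined. The fact you quote, that a semicomplete digraph on $2k+1$ vertices has a vertex with in- and outdegree at least $k$, is false: the $5$-vertex tournament with score sequence $(1,1,1,3,4)$ is a counterexample. The paper's bound is direct. In a vertex-minimal core $B$, let $C$ be the set of vertices with $d_B^-=k_2$. Then $\lvert C\rvert\le 2k_2+1$ by Lemma~\ref{lem:few-low-degree} applied to $D[C]$, and minimality forces $B=\bigcup_{y\in C}N_B^-(y)$.
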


We first establish two boundedness lemmas used by the algorithm.

\begin{lemma}
\label{lem:few-low-degree}
Let $H$ be a semicomplete digraph and let $r\ge 0$ be an integer. Then at most $2r+1$
vertices of $H$ have outdegree at most $r$, and at most $2r+1$ vertices of $H$
have indegree at most $r$.
\end{lemma}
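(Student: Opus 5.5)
Let $L$ be the set of vertices of $H$ with outdegree at most $r$, and set $m=|L|$. I will double count the arcs of the induced subdigraph $H[L]$; the indegree statement then follows from the converse.

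\emph{Lower bound.} Since $H$ is semicomplete, every pair of distinct vertices of $L$ is joined by at least one arc. Hence $H[L]$ has at least $\binom{m}{2}$ arcs.

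\emph{Upper bound.} Each arc of $H[L]$ leaves some vertex $v\in L$. The number of arcs leaving $v$ inside $L$ is $d_L^+(v)\le d_H^+(v)\le r$. Summing over $L$ gives at most $mr$ arcs.

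Comparing the two bounds gives $m(m-1)/2\le mr$. If $m=0$ there is nothing to prove. Otherwise dividing by $m$ yields $m-1\le 2r$, that is, $m\le 2r+1$.

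For the indegree statement, note that the converse $\overleftarrow H$ is again semicomplete. Moreover, $d^-_H(v)=d^+_{\overleftarrow H}(v)$ for every vertex $v$. Applying the outdegree bound to $\overleftarrow H$ shows that at most $2r+1$ vertices of $H$ have indegree at most $r$.

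No step here is a real obstacle. The only point needing care is that $2$-cycles are allowed in a semicomplete digraph. This affects only the lower bound, which can then only increase, so the counting argument is unaffected.
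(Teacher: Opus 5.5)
Your proof is correct and is essentially the paper's own argument. Both double count the arcs of the subdigraph induced by the low-outdegree vertices: semicompleteness gives at least $\binom{m}{2}$ arcs, the degree bound gives at most $mr$, and the indegree case follows by passing to the converse digraph.
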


\begin{proof}
It suffices to prove the outdegree statement, since the indegree statement
follows by applying the same argument to the converse digraph. Let
$S=\{v\in V(H):d_H^+(v)\le r\}$ and write $s=\lvert S\rvert$. By
semicompleteness, every unordered pair of distinct vertices in $S$ contributes
at least one arc to $H[S]$. Hence the sum of the outdegrees in $H[S]$ is at
least $\binom{s}{2}$, whereas it is at most $rs$ because every vertex of $S$
has outdegree at most $r$ in $H$. Thus $\binom{s}{2}\le rs$. If $s>0$, then
$s-1\le 2r$, and therefore $s\le 2r+1$; the case $s=0$ is immediate.
\end{proof}

For $b\ge 1$, every vertex-minimal indegree $b$-core has at least $b+1$
vertices. Indeed, for such a core $B$ and every $v\in B$, we have
$b\le d_B^-(v)\le\lvert B\rvert-1$ because the digraph is loopless.

\begin{lemma}
\label{lem:minimal-core}
Let $b\ge 1$, and let $B$ be a vertex-minimal indegree $b$-core in a
semicomplete digraph $D$. Then $\lvert B\rvert\le b(2b+1)$.
\end{lemma}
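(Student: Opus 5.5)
The plan is to use minimality to show that every vertex of $B$ is an in-neighbour of a vertex whose indegree in $D[B]$ is exactly $b$. Lemma~\ref{lem:few-low-degree} then bounds the number of such vertices, and each has only $b$ in-neighbours.

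Set $H=D[B]$, which is again semicomplete. Let
\[
T=\{y\in B: d_B^-(y)=b\}
\]
be the set of \emph{tight} vertices of $H$; since $\delta^-(H)\ge b$, these are exactly the vertices of $H$ with indegree at most $b$.

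\textbf{Step 1: every vertex of $B$ has an out-neighbour in $T$.} Fix $x\in B$. By the remark preceding the lemma, $|B|\ge b+1\ge 2$, so $B\setminus\{x\}$ is nonempty. By minimality, $\delta^-(D[B\setminus\{x\}])<b$, so some $y\in B\setminus\{x\}$ satisfies $d^-_{B\setminus\{x\}}(y)<b$. Deleting $x$ lowers the indegree of $y$ by at most one, and only when $xy\in A(D)$. Since $d_B^-(y)\ge b$, we conclude that $d_B^-(y)=b$ and $xy\in A(D)$. Hence $y\in T$ and $x\in N_B^-(y)$, and therefore
\[
B=\bigcup_{y\in T}N_B^-(y).
\]

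\textbf{Step 2: counting.} Lemma~\ref{lem:few-low-degree}, applied to $H$ with $r=b$, gives $|T|\le 2b+1$. Each $y\in T$ has $|N_B^-(y)|=b$. The union in Step~1 therefore gives
\[
|B|\le b|T|\le b(2b+1).
\]

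The only delicate point is Step~1. One must check that the vertex whose indegree drops is different from $x$ itself, which holds because $y$ is chosen in $B\setminus\{x\}$. One must also check that the drop is caused by the arc $xy$, which forces $y$ to be tight. Everything else is a direct application of the earlier lemma.
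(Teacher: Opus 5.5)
Your proof is correct and follows the paper's argument essentially verbatim: minimality forces every $x\in B$ to be an in-neighbour of a tight vertex $y$ with $d_B^-(y)=b$, and Lemma~\ref{lem:few-low-degree} bounds the number of tight vertices by $2b+1$. The only differences are cosmetic. You apply that lemma to $D[B]$, where the tight vertices are exactly those of indegree at most $b$, while the paper applies it to $D[C]$. You also check explicitly that $B\setminus\{x\}$ is nonempty.
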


\begin{proof}
Let $C=\{v\in B:d_B^-(v)=b\}$. Every vertex of $C$ has indegree at most $b$ in
$D[C]$, so Lemma~\ref{lem:few-low-degree} gives $\lvert C\rvert\le 2b+1$.

Fix $x\in B$. By the vertex-minimality of $B$, some
$y\in B\setminus\{x\}$ has indegree less than $b$ in $D[B\setminus\{x\}]$.
Since $d_B^-(y)\ge b$ and deleting one vertex decreases an indegree by at most
one, we must have $d_B^-(y)=b$ and $xy\in A(D)$. Thus $y\in C$ and
$x\in N_B^-(y)$. As $x$ was arbitrary, $B$ is covered by the sets
$N_B^-(y)$ with $y\in C$. Each such set has exactly $b$ vertices, and hence
$\lvert B\rvert\le b\lvert C\rvert\le b(2b+1)$.
\end{proof}

\setcounter{theorem}{3}

For $b\ge 1$, let $E,B\subseteq V(D)$ be disjoint with
$\delta^-(D[B])\ge b$, and consider
$\operatorname{cl}_{b,E}(B)$ as defined in Section~\ref{section 2}. Its defining
process stabilizes after at most $\lvert V(D)\rvert$ rounds. The initial vertices
already satisfy the required indegree condition, and every added vertex has at
least $b$ in-neighbors among vertices added earlier. Consequently
$\delta^-(D[\operatorname{cl}_{b,E}(B)])\ge b$.

The closure has the following absorption property. Suppose that distinct
vertices $w_1,\ldots,w_t\in V(D)\setminus E$ can be ordered so that each $w_i$
has at least $b$ in-neighbors in $B\cup\{w_1,\ldots,w_{i-1}\}$. Then all these
vertices belong to $\operatorname{cl}_{b,E}(B)$. Indeed, once $w_1,\ldots,w_{i-1}$ have been
absorbed, the closure contains the set that supplies the required
in-neighbors of $w_i$, so stability forces $w_i$ to be absorbed as well.

\medskip

\begin{algorithm}[H]
\caption{Deciding a $(\delta^+\ge k_1,\delta^-\ge k_2)$-partition}
\label{alg:plus-minus-partition}
\noindent\textbf{Input:} A semicomplete digraph $D$ given by its adjacency matrix, and fixed positive integers $k_1,k_2$.\par
\noindent\textbf{Output:} A $(\delta^+\ge k_1,\delta^-\ge k_2)$-partition of $D$, or a declaration that no such partition exists.
\begin{enumerate}[label=\textup{(\arabic*)},leftmargin=*,itemsep=2pt]
    \item Put $a:=k_1$, $b:=k_2$, $p:=a(2a+1)$, and $q:=b(2b+1)$; the bounds $p$ and $q$ are those supplied by Lemmas~\ref{lem:few-low-degree} and~\ref{lem:minimal-core}.
    \item Enumerate every set $E\subseteq V(D)$ with $\lvert E\rvert\le p$.
    \item For each such $E$, enumerate every set $B\subseteq V(D)\setminus E$ satisfying $b+1\le\lvert B\rvert\le q$ and $\delta^-(D[B])\ge b$.
    \item Compute $Y:=\operatorname{cl}_{b,E}(B)$ as in Definition~\ref{def:indegree-closure}, and let $X:=V(D)\setminus Y$.
    \item If $X$ is nonempty and $\delta^+(D[X])\ge a$, output $(X,Y)$.
    \item If all candidate pairs are exhausted without an output, answer that no required partition exists.
\end{enumerate}
\end{algorithm}

\medskip

\begin{proof}[Proof of Theorem~\ref{thm:main}]
\emph{Soundness.}
Suppose that the algorithm outputs $Y=\operatorname{cl}_{b,E}(B)$ and
$X=V(D)\setminus Y$. It directly verifies that $X$ is nonempty and that
$\delta^+(D[X])\ge a$. Moreover, $B\subseteq Y$ and $\lvert B\rvert\ge b+1$,
so $Y$ is nonempty, while the construction of the closure gives
$\delta^-(D[Y])\ge b$. Thus $(X,Y)$ is a valid
$(\delta^+\ge a,\delta^-\ge b)$-partition.

\emph{Completeness.}
Suppose that $D$ has such a partition. Among all feasible partitions
$V(D)=X\cup Y$ with $\delta^+(D[X])\ge a$ and
$\delta^-(D[Y])\ge b$, choose one with $\lvert X\rvert$ minimum. Let
$C=\{u\in X:d_X^+(u)=a\}$ and let $E$ be the union of the out-neighborhoods
$N_{D[X]}^+(u)$ over all $u\in C$. Every vertex of $C$ has outdegree at most
$a$ in $D[C]$, so Lemma~\ref{lem:few-low-degree} gives
$\lvert C\rvert\le 2a+1$. Since each $u\in C$ has exactly $a$ out-neighbors
in $X$, it follows that $\lvert E\rvert\le a\lvert C\rvert\le p$.

The minimal choice of $X$ implies the restriction that
\(d_Y^-(x)<b\) for every \(x\in X\setminus E\). To see this, fix $x\in X\setminus E$. Deleting $x$ from $X$ preserves minimum outdegree at least $a$. Indeed, a vertex $u$ that does not send an arc to $x$ loses no out-neighbor, while a vertex $u$ with $ux\in A(D)$ cannot belong to $C$, because $x\notin E$; hence $d_X^+(u)\ge a+1$ before the deletion. If $d_Y^-(x)\ge b$ also held, moving $x$ from $X$ to $Y$ would preserve both degree conditions and produce a feasible partition with a smaller first part, a contradiction.

Starting from $Y$, repeatedly delete a vertex whenever the remaining induced
subdigraph still has minimum indegree at least $b$. The process terminates with
a vertex-minimal indegree $b$-core $B\subseteq Y$, and
Lemma~\ref{lem:minimal-core} gives $b+1\le\lvert B\rvert\le q$. Reverse the
deletion order and write $Y\setminus B=\{w_1,\ldots,w_t\}$. At the moment a
vertex $w_i$ was deleted, it had at least $b$ in-neighbors among the vertices
that remained. Therefore, in the reverse order, each $w_i$ has at least $b$
in-neighbors in $B\cup\{w_1,\ldots,w_{i-1}\}$.

The algorithm enumerates this particular pair $(E,B)$ because $E\subseteq X$,
$B\subseteq Y$, and both sets satisfy the required size bounds. Let
$S=\operatorname{cl}_{b,E}(B)$. The absorption property shows that every vertex of
$Y\setminus B$ belongs to $S$, and hence $Y\subseteq S$. Conversely, the closure
cannot add a vertex of $X\setminus E$: at the first round in which such a
vertex $x$ was added, the current closure set would still be contained in
$Y$, so $x$ would have at least $b$ in-neighbors in $Y$, contradicting the key
restriction above. The closure never adds a vertex of $E$ by definition.
Thus $S\subseteq Y$, and we obtain the central identity
\(
\operatorname{cl}_{b,E}(B)=Y.
\)
When the algorithm reaches $(E,B)$, it therefore reconstructs the feasible
partition exactly and accepts.

\emph{Running time.}
There are $O(n^{p+q})$ candidate pairs $(E,B)$ because $p$ and $q$ are fixed.
For each pair, the closure and the two degree tests can be computed in
$O(n^2)$ time from an adjacency matrix. The total running time is therefore
$O(n^{p+q+2})$, which is polynomial in $n$ for fixed $k_1,k_2$.
\end{proof}

\subsection{Algorithm for $(\delta^+\ge k_1,\delta^0\ge k_2)$-partition}

\begin{theorem}\label{thm:semi-out-degree-partition}
For every fixed pair of positive integers $k_1,k_2$, Algorithm 2 that decides whether a given semicomplete digraph $D$ has a
$(\delta^+\ge k_1,\delta^0\ge k_2)$-partition and constructs such a partition
whenever one exists. Moreover, writing $h=1+k_1(2k_1+1)$, $r=k_2+h$, and
$q=2(2r-1)(r-1)$, its running time is $O(n^{q+2})$, where
$n=\lvert V(D)\rvert$.
\end{theorem}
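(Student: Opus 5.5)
Following the template of Algorithm~\ref{alg:plus-minus-partition}, I will guess a bounded amount of information and then recover the whole second part by a deterministic closure. Among all $(\delta^+\ge a,\delta^0\ge b)$-partitions $(X,Y)$ (with $a=k_1$, $b=k_2$), I fix one with $\lvert X\rvert$ minimum. As in the proof of Theorem~\ref{thm:main}, let $C=\{u\in X:d_X^+(u)=a\}$ and let $E$ be the union of $N^+_{D[X]}(u)$ over $u\in C$. Lemma~\ref{lem:few-low-degree} gives $\lvert E\rvert\le a(2a+1)=h-1$. Every $x\in X\setminus E$ can be deleted from $X$ without destroying $\delta^+\ge a$. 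Minimality of $X$ then says that adding $x$ to $Y$ must break $\delta^0(D[Y\cup\{x\}])\ge b$. Since vertices of $Y$ only gain degree, this means $d^-_Y(x)<b$ or $d^+_Y(x)<b$.

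The difficulty is that the closure must now control both in- and outdegrees. A vertex $x\in X\setminus E$ may have many in-neighbours in $Y$ but few out-neighbours. Semicompleteness helps here. If $d_Y^+(x)<b$, then $x$ has at least $\lvert Y\rvert-b$ in-neighbours in $Y$. So the vertices of $X\setminus E$ split into those with $d^-_Y<b$ and those with $d^+_Y<b$ (almost dominated by $Y$). I will therefore guess a bounded ``certificate'' set $Q$ of size at most $q=2(2r-1)(r-1)$, where $r=k_2+h$, and grow $Y$ from $Q$. A vertex $v\notin E$ is added only when it has at least $b$ in-neighbours and at least $b$ out-neighbours in the current set. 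The claim to prove is that some $Q\subseteq Y$ makes this two-sided closure equal to $Y$. That requires $Q$ to be a seed from which every vertex of $Y$ can be absorbed in some order with both degrees at least $b$.

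To build $Q$, I will use Lemma~\ref{lem:few-low-degree} inside $D[Y]$ with threshold $r-1$. At most $2r-1$ vertices of $Y$ have indegree at most $r-1$, and at most $2r-1$ have outdegree at most $r-1$. For each such low vertex, I put into $Q$ $r-1$ of its in- and out-neighbours, which gives the bound $q$. Every other vertex of $Y$ has at least $r=b+h$ in- and out-neighbours in $Y$. The slack $h$ is meant to absorb the at most $h-1$ vertices of $E$ plus one. The absorption order will be obtained by reversing a peeling of $Y$ down to $Q$. The main obstacle is making this peeling respect both degree conditions simultaneously, and it is exactly where the slack $r=b+h$ should be spent. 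I would first show that high-degree vertices remain absorbable after removing $E$ and one vertex. I would then show that the closure never admits a vertex of $X\setminus E$, because at its first admission the current set lies in $Y$, contradicting $d^-_Y(x)<b$ or $d^+_Y(x)<b$. If this fails, I would instead enumerate $E$ separately, at an extra cost of $n^{h-1}$, and absorb with respect to $V\setminus E$.

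The algorithm enumerates all $(E,Q)$, computes the closure $Y$, sets $X=V\setminus Y$, and tests $X\neq\varnothing$, $\delta^+(D[X])\ge a$ and $\delta^0(D[Y])\ge b$. Soundness is immediate from the final test. Completeness follows from the identity closure $=Y$ for the minimal partition. Each closure computation costs $O(n^2)$, which gives the stated bound $O(n^{q+2})$ once the guessed objects are counted within $q$.
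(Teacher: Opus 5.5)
Your extremal setup is sound. With $\lvert X\rvert$ minimum, $\lvert E\rvert\le a(2a+1)$, and every $x\in X\setminus E$ has $d_Y^-(x)<b$ or $d_Y^+(x)<b$. Your first-admission argument then correctly shows that a two-sided closure started inside $Y$ and avoiding $E$ never enters $X$.

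\textbf{The gap is the seed.} The set $Q$ you build is the neighbours in $Y$ of the vertices of $Y$ with in- or outdegree below $r$, i.e.\ the protective set of Lemma~\ref{lem:protective-set}. It does not generate $Y$ under the two-sided closure. Nothing forces a vertex of $Y\setminus Q$ to have $b$ in- and $b$ out-neighbours in $Q$, so a reverse peeling ``down to $Q$'' need not exist. Concretely, if $\delta^0(D[Y])\ge r$, there are no low vertices at all. Then $Q=\varnothing$, the closure from $Q$ is empty, and the identity ``closure $=Y$'' fails for the certificate you exhibit. The slack $r=b+h$ has no role in your argument, because in the minimization route nothing is ever removed from $Y$.

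The time bound is also unjustified as written. Enumerating $E$ (up to $h-1$ vertices) and $Q$ (up to $q$ vertices) independently gives $O(n^{q+h-1})$ pairs, not $O(n^{q})$.

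\textbf{Repair within your framework.} Seed with a vertex-minimal semidegree core. Peel $Y$, keeping $\delta^0\ge b$, until a vertex-minimal $B$ remains. Reversing the peeling gives every vertex of $Y\setminus B$ at least $b$ in- and $b$ out-neighbours among earlier vertices. The covering argument of Lemma~\ref{lem:minimal-core}, run separately for the vertices with $d_B^+=b$ and those with $d_B^-=b$, gives $\lvert B\rvert\le 2b(2b+1)\le q$. So your enumeration does contain a good seed; it is the proof's choice of certificate that is wrong. With this seed, and $Q$ restricted to size at most $2b(2b+1)$, your closure identity holds. You get a correct algorithm running in $O(n^{a(2a+1)+2b(2b+1)+2})\subseteq O(n^{q+2})$.

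That is still a different algorithm from Algorithm~2, so it does not verify the statement about that algorithm. The paper goes the other way:
\begin{itemize}
\item It takes $\lvert P\rvert$ \emph{maximum}, guesses only $Z=Z_Q(h)\subseteq Q$, and recovers $P$ as the maximum out-core $K_{k_1}(V(D)\setminus Z)$.
\item If this core contained some $x\in Q$, Lemma~\ref{lem:small-out-witness} gives an out-witness $C\ni x$ with $\lvert C\rvert\le h$.
\item Lemma~\ref{lem:protective-set} then lets $C\cap Q$ move to $P$ while $Q\setminus C$ keeps semidegree at least $k_2$, contradicting maximality.
\end{itemize}
That transfer of up to $h$ vertices out of $Q$ is exactly what the slack $r=k_2+h$ pays for.
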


We first record the threshold-shifted form of
Lemma~\ref{lem:few-low-degree}.

\begin{lemma}[Low-degree threshold shift]
\label{lem:low-degree-threshold-shift}
Let $H$ be a semicomplete digraph and let $r\ge1$ be an integer. Then at most
$2r-1$ vertices of $H$ have outdegree less than $r$, and at most $2r-1$
vertices of $H$ have indegree less than $r$.
\end{lemma}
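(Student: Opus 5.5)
This is Lemma~\ref{lem:few-low-degree} applied with $r-1$ in place of $r$. Since $r\ge1$, we have $r-1\ge0$, so the earlier lemma is available. For integer degrees, ``outdegree less than $r$'' is the same as ``outdegree at most $r-1$''. Lemma~\ref{lem:few-low-degree} with parameter $r-1$ therefore bounds the number of such vertices by $2(r-1)+1=2r-1$. The indegree statement follows in the same way from the indegree half of Lemma~\ref{lem:few-low-degree}, or equivalently by passing to the converse digraph $\overleftarrow H$, which is again semicomplete.

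If one prefers a self-contained argument, I would simply repeat the counting argument. Let $S$ be the set of vertices of outdegree less than $r$ and let $s=|S|$. By semicompleteness, $H[S]$ contains at least $\binom{s}{2}$ arcs. Each vertex of $S$ contributes at most $r-1$ of them as a tail. Hence $s(s-1)/2\le s(r-1)$, so $s\le 2r-1$ whenever $s>0$, and the case $s=0$ is trivial.

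There is no real obstacle here. The only point needing care is the integrality remark that turns the strict inequality $<r$ into $\le r-1$, together with the check that $r-1\ge0$ so that Lemma~\ref{lem:few-low-degree} applies.
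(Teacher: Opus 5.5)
Your proof is correct and is exactly the paper's argument: integrality turns ``degree less than $r$'' into ``degree at most $r-1$'', and Lemma~\ref{lem:few-low-degree} with the parameter $r-1\ge 0$ gives the bound $2(r-1)+1=2r-1$. Your optional self-contained counting argument is just the proof of Lemma~\ref{lem:few-low-degree} rerun with $r-1$ in place of $r$, so it adds nothing new but is also valid.
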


\begin{proof}
Since degrees are integers, having degree less than $r$ is equivalent to
having degree at most $r-1$. The result follows from
Lemma~\ref{lem:few-low-degree} applied with the integer parameter $r-1$.
\end{proof}

\begin{lemma}[Deleting vertices outside a protective set]
\label{lem:protective-set}
Let $b\ge1$ and $h\ge0$ be integers, let $Y\subseteq V(D)$ satisfies
$\delta^0(D[Y])\ge b$, and put $r=b+h$. Define
$L^+=\{v\in Y:d_Y^+(v)<r\}$ and $L^-=\{v\in Y:d_Y^-(v)<r\}$, and let
\[
Z_Y(h):=\bigcup_{v\in L^+}N_Y^+(v)\;\cup\!\bigcup_{v\in L^-}N_Y^-(v).
\]
Then $\lvert Z_Y(h)\rvert\le 2(2r-1)(r-1)$. Moreover, whenever
$E\subseteq Y\setminus Z_Y(h)$ and $\lvert E\rvert\le h$, the set
$Y\setminus E$ is nonempty and satisfies
$\delta^0(D[Y\setminus E])\ge b$.
\end{lemma}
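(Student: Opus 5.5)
The proof has two parts: a counting bound on $Z_Y(h)$, and a degree-preservation argument for deleting $E$.

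\emph{Size bound.} Apply Lemma~\ref{lem:low-degree-threshold-shift} to the semicomplete digraph $H=D[Y]$ with parameter $r\ge1$ (valid since $b\ge1$). This gives $\lvert L^+\rvert\le 2r-1$ and $\lvert L^-\rvert\le 2r-1$. Each $v\in L^+$ has $d_Y^+(v)\le r-1$, so $\lvert N_Y^+(v)\rvert\le r-1$; likewise $\lvert N_Y^-(v)\rvert\le r-1$ for $v\in L^-$. Summing over the union yields
\[
\lvert Z_Y(h)\rvert\le (2r-1)(r-1)+(2r-1)(r-1)=2(2r-1)(r-1).
\]

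\emph{Degree preservation.} Fix $E\subseteq Y\setminus Z_Y(h)$ with $\lvert E\rvert\le h$, and take any $v\in Y\setminus E$. We show $d^+_{Y\setminus E}(v)\ge b$ by cases.
\begin{itemize}
\item If $v\in L^+$, then every out-neighbour of $v$ in $Y$ lies in $N_Y^+(v)\subseteq Z_Y(h)$. Since $E$ avoids $Z_Y(h)$, none of them is deleted, so $d^+_{Y\setminus E}(v)=d^+_Y(v)\ge b$.
\item If $v\notin L^+$, then $d_Y^+(v)\ge r=b+h$. Deleting $E$ removes at most $h$ out-neighbours, leaving at least $b$.
\end{itemize}
The indegree bound follows by the symmetric argument with $L^-$ and $N_Y^-$. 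Together these give $\delta^0(D[Y\setminus E])\ge b$.

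\emph{Nonemptiness.} We need $Y\setminus E\neq\varnothing$. This is the one place needing care, since $E$ could a priori exhaust $Y$. Suppose $Y\setminus E=\varnothing$, i.e.\ $E=Y$. Then $Y\cap Z_Y(h)=\varnothing$, so $Z_Y(h)=\varnothing$ and $L^+=\varnothing$ (any $v\in L^+$ has $d_Y^+(v)\ge b\ge1$, hence a nonempty out-neighbourhood inside $Z_Y(h)$). Thus every $v\in Y$ has $d_Y^+(v)\ge r=b+h$, so $\lvert Y\rvert\ge b+h+1>h\ge\lvert E\rvert$, contradicting $E=Y$.

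The whole argument is routine once the case split on $L^\pm$ is made; the nonemptiness step is the only point where the hypothesis $b\ge1$ is essential.
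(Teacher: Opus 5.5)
Your proof is correct and follows the paper's argument essentially step for step: the same union bound via Lemma~\ref{lem:low-degree-threshold-shift} and the same case split on whether $v\in L^{\pm}$. The only difference is cosmetic, in the nonemptiness step. The paper argues that $E=Y$ forces $\lvert Y\rvert\le h$, hence $L^+=Y$, and then uses $\delta^-(D[Y])\ge1$ to get $Y\subseteq Z_Y(h)$. You run it the other way: $Z_Y(h)=\varnothing$ forces $L^+=\varnothing$ via $\delta^+(D[Y])\ge1$, hence $\lvert Y\rvert>h$. Both versions rely on the implicit assumption $Y\neq\varnothing$.
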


\begin{proof}
Apply Lemma~\ref{lem:low-degree-threshold-shift} to $D[Y]$. Hence
$\lvert L^+\rvert,\lvert L^-\rvert\le 2r-1$. Every vertex of $L^+$ has at most
$r-1$ out-neighbors in $Y$, and every vertex of $L^-$ has at most $r-1$
in-neighbors there, which gives the asserted bound on $Z_Y(h)$.

Fix $u\in Y\setminus E$. If $d_Y^+(u)<r$, then $u\in L^+$ and all its
out-neighbors in $Y$ lie in $Z_Y(h)$. Since $E$ is disjoint from this
protective set, none of them is deleted, and therefore
$d_{Y\setminus E}^+(u)=d_Y^+(u)\ge b$. If $d_Y^+(u)\ge r$, deleting at most
$h$ vertices leaves $d_{Y\setminus E}^+(u)\ge r-h=b$. The symmetric argument
gives $d_{Y\setminus E}^-(u)\ge b$.

It remains to verify nonemptiness. If $E=Y$, then $Y$ is nonempty and
$\lvert Y\rvert\le h$, so $h\ge1$ and every $v\in Y$ satisfies
$d_Y^+(v)\le\lvert Y\rvert-1\le h-1<r$. Thus $L^+=Y$. Since
$\delta^-(D[Y])\ge b\ge1$, every vertex of $Y$ is an out-neighbor of another
vertex of $Y$, and hence $Y\subseteq Z_Y(h)$. This contradicts
$E\subseteq Y\setminus Z_Y(h)$.
\end{proof}

\begin{lemma}[A small witness through a prescribed vertex]
\label{lem:small-out-witness}
Let $a$ be a positive integer, let $X\subseteq V(D)$ satisfy
$\delta^+(D[X])\ge a$, and let $x\in X$. There
exists a set $C\subseteq X$ containing $x$ such that
$\delta^+(D[C])\ge a$ and $\lvert C\rvert\le h_a:=1+a(2a+1)$.
\end{lemma}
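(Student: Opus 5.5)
We mimic the proof of Lemma~\ref{lem:minimal-core}, now for outdegrees, but we minimise only over deletions of vertices other than the prescribed vertex $x$. The extra $+1$ in $h_a$ is exactly the allowance for $x$.

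\emph{Construction.} Start from $C:=X$, which contains $x$ and satisfies $\delta^+(D[C])\ge a$. Repeatedly delete a vertex $y\in C\setminus\{x\}$ whenever $\delta^+(D[C\setminus\{y\}])\ge a$ still holds. The process terminates with a set $C\subseteq X$ such that:
\begin{itemize}
\item $x\in C$ and $\delta^+(D[C])\ge a$;
\item for every $y\in C\setminus\{x\}$ we have $\delta^+(D[C\setminus\{y\}])<a$.
\end{itemize}
Since $a\ge1$ and $D$ is loopless, $C$ is nonempty and indeed has at least $a+1$ vertices.

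\emph{Covering argument.} Let $T=\{z\in C: d_C^+(z)=a\}$. Fix $y\in C\setminus\{x\}$. Some vertex $z\in C\setminus\{y\}$ has outdegree less than $a$ in $D[C\setminus\{y\}]$. Deleting one vertex lowers an outdegree by at most one, and $d_C^+(z)\ge a$. Hence $d_C^+(z)=a$ and $zy\in A(D)$, so $z\in T$ and $y\in N_C^+(z)$. Therefore
\[
C\setminus\{x\}\subseteq\bigcup_{z\in T}N_C^+(z),
\]
and each set in this union has exactly $a$ elements.

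\emph{Bounding $T$.} Every $z\in T$ has outdegree at most $a$ in $D[C]$, and hence also in the semicomplete digraph $D[T]$. Lemma~\ref{lem:few-low-degree} with $r=a$ then gives $\lvert T\rvert\le 2a+1$. Combining this with the covering,
\[
\lvert C\rvert\le 1+a\lvert T\rvert\le 1+a(2a+1)=h_a .
\]

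The only delicate point is that the minimality has to exclude $x$. A genuinely vertex-minimal core inside $X$ might not contain $x$, and simply adding $x$ to such a core does not guarantee that $x$ has $a$ out-neighbours in the result. Restricting the deletions to $C\setminus\{x\}$ avoids this problem, at the cost of leaving $x$ uncovered in the union, which is why the bound has the extra $+1$.
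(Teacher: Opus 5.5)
Your proof is correct and follows essentially the same route as the paper. The paper takes an inclusion-minimal $C\subseteq X$ containing $x$, whereas you reach $C$ by greedily deleting vertices other than $x$; both arguments use only the fact that no single $y\neq x$ can be removed, together with the covering of $C\setminus\{x\}$ by the out-neighbourhoods of the outdegree-$a$ vertices and the bound $2a+1$ on their number from Lemma~\ref{lem:few-low-degree}.
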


\begin{proof}
Choose an inclusion-minimal subset $C\subseteq X$ that contains $x$ and
satisfies $\delta^+(D[C])\ge a$, and put
$L=\{u\in C:d_C^+(u)=a\}$. For every $y\in C\setminus\{x\}$, the set
$C\setminus\{y\}$ fails the minimum-outdegree condition. Hence some
$u\in C\setminus\{y\}$ has outdegree less than $a$ after $y$ is deleted. Since
$d_C^+(u)\ge a$ and one deletion changes this degree by at most one, we have
$d_C^+(u)=a$ and $uy\in A(D)$. Consequently
$C\setminus\{x\}\subseteq\bigcup_{u\in L}N_C^+(u)$.

Every vertex of $L$ has outdegree at most $a$ in $D[L]$, so
Lemma~\ref{lem:few-low-degree} gives $\lvert L\rvert\le2a+1$. Each vertex of
$L$ has exactly $a$ out-neighbors in $C$, and therefore
$\lvert C\rvert\le1+a\lvert L\rvert\le1+a(2a+1)$.
\end{proof}

\begin{lemma}[Maximum out-core]
\label{lem:maximum-out-core}
Let $a$ be a positive integer and let $U\subseteq V(D)$. The set $K_a(U)$
introduced in Section~\ref{section 2} is well defined and can be computed in $O(\lvert V(D)\rvert^2)$ time.
\end{lemma}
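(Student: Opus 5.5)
The plan has two parts: show that the family of subsets of $U$ with minimum outdegree at least $a$ in the induced subdigraph is closed under union, and then compute its maximum member by peeling off low-outdegree vertices.

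\emph{Well-definedness.} Let $S_1,S_2\subseteq U$ with $\delta^+(D[S_1])\ge a$ and $\delta^+(D[S_2])\ge a$. For $v\in S_1\cup S_2$, say $v\in S_i$, we have
\[
d^+_{S_1\cup S_2}(v)\ge d^+_{S_i}(v)\ge a .
\]
Hence $\delta^+(D[S_1\cup S_2])\ge a$. Since $U$ is finite, the union of all such sets is itself such a set, and it is the unique maximum one. If the family contains no nonempty set, the convention $K_a(U)=\varnothing$ applies. Note that the empty set causes no issue here.

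\emph{Algorithm.} Set $W:=U$ and compute $d_W^+(v)$ for every $v\in W$ from the adjacency matrix, in $O(n^2)$ time. Keep a queue of vertices with $d_W^+(v)<a$. Repeatedly remove a queued vertex $v$ from $W$. For each $u\in W$ with $uv\in A(D)$, decrement $d_W^+(u)$, and enqueue $u$ if its outdegree drops below $a$. Each removal scans one column of the adjacency matrix in $O(n)$ time, and at most $n$ vertices are removed, so the total is $O(n^2)$. Output the final $W$, which is possibly empty.

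\emph{Correctness.} By an induction on the number of removals, $K_a(U)\subseteq W$ at all times. A vertex $v$ is removed only when $d_W^+(v)<a$. If $v$ lay in $K_a(U)$, then, because $K_a(U)\subseteq W$, we would have $d_W^+(v)\ge d_{K_a(U)}^+(v)\ge a$, a contradiction. At termination every vertex of $W$ has $d_W^+\ge a$. If $W$ is nonempty, maximality therefore gives $W\subseteq K_a(U)$, so $W=K_a(U)$. If $W$ is empty, the inclusion $K_a(U)\subseteq W$ forces $K_a(U)=\varnothing$.

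No step is a real obstacle; the only point needing care is maintaining the invariant $K_a(U)\subseteq W$ so that the peeling never deletes a vertex of the maximum set.
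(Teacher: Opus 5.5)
Your proposal is correct and follows essentially the same route as the paper: you use closure of qualifying sets under union to get the unique maximum $K_a(U)$, and queue-based peeling of vertices with $d_W^+(v)<a$ to compute it in $O(|V(D)|^2)$ time. Your invariant $K_a(U)\subseteq W$ is the paper's observation that each deletion preserves every qualifying subset, specialized to the maximum one.
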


\begin{proof}
The union of two sets satisfying the minimum-outdegree condition satisfies the
same condition: each vertex belongs to one of the original sets and already
has at least $a$ out-neighbors there. Since $D$ is finite, the union of all
qualifying subsets of $U$, when at least one exists, is itself qualifying and
contains every other such subset. This proves existence and uniqueness.

To compute the maximum set, start with $R=U$ and repeatedly delete a vertex
$v$ with $d_R^+(v)<a$. If a set $S\subseteq R$ satisfies
$\delta^+(D[S])\ge a$, then $v\notin S$, because otherwise
$a\le d_S^+(v)\le d_R^+(v)<a$. Thus each deletion preserves every qualifying
subset of $U$. At termination, the remaining set is either empty or has
minimum outdegree at least $a$; in the latter case it contains every other
qualifying subset.

For an adjacency-matrix implementation, compute the current outdegrees in
$R$ and place every vertex of outdegree less than $a$ in a queue. When a vertex
$v$ is deleted, scan the vertices still in $R$ that send an arc to $v$, update
their outdegrees, and enqueue those that newly fall below $a$. Each vertex is
deleted at most once and each deletion requires one linear scan, so the total
time is $O(\lvert V(D)\rvert^2)$.
\end{proof}

\medskip

\begin{algorithm}[H]
\caption{Deciding a $(\delta^+\ge k_1,\delta^0\ge k_2)$-partition}
\label{alg:plus-zero-partition}
\noindent\textbf{Input:} A semicomplete digraph $D$ given by its adjacency matrix, and fixed positive integers $k_1,k_2$.\par
\noindent\textbf{Output:} A $(\delta^+\ge k_1,\delta^0\ge k_2)$-partition of $D$, or a declaration that no such partition exists.
\begin{enumerate}[label=\textup{(\arabic*)},leftmargin=*,itemsep=2pt]
    \item Set $h:=1+k_1(2k_1+1)$, $r:=k_2+h$, and $q:=2(2r-1)(r-1)$; here $h$ is the witness bound from Lemma~\ref{lem:small-out-witness}, while $q$ is the protective-set bound from Lemmas~\ref{lem:low-degree-threshold-shift} and~\ref{lem:protective-set}.
    \item Enumerate every set $Z\subseteq V(D)$ with $\lvert Z\rvert\le q$.
    \item For each such set, compute $P_Z:=K_{k_1}(V(D)\setminus Z)$ using Lemma~\ref{lem:maximum-out-core}, and put $Q_Z:=V(D)\setminus P_Z$.
    \item If both sets are nonempty and $\delta^0(D[Q_Z])\ge k_2$, output $(P_Z,Q_Z)$.
    \item If no candidate passes this test, answer that no required partition exists.
\end{enumerate}
\end{algorithm}

\medskip

\begin{proof}[Proof of Theorem~\ref{thm:semi-out-degree-partition}]
\emph{Soundness.}
Whenever the algorithm outputs $(P_Z,Q_Z)$, the definition of $K_{k_1}$ gives
$\delta^+(D[P_Z])\ge k_1$, and the algorithm directly verifies that
$\delta^0(D[Q_Z])\ge k_2$ and that both parts are nonempty. Hence every output
is valid.

\emph{Completeness.}
Suppose that a required partition exists. Among all ordered partitions
$V(D)=P\cup Q$ satisfying $\delta^+(D[P])\ge k_1$ and
$\delta^0(D[Q])\ge k_2$, choose one with $\lvert P\rvert$ maximum. Apply
Lemma~\ref{lem:protective-set} to $Y=Q$, $b=k_2$, and the value of $h$ fixed by
the algorithm. It produces a set $Z=Z_Q(h)$ with $\lvert Z\rvert\le q$, so the
algorithm enumerates this $Z$. Since $Z\subseteq Q$, we have
$P\subseteq V(D)\setminus Z$. The maximum property established in
Lemma~\ref{lem:maximum-out-core} therefore gives $P\subseteq K$, where
$K=K_{k_1}(V(D)\setminus Z)$.

We claim that $K=P$. Otherwise choose $x\in K\setminus P$. Then
$x\in Q\setminus Z$. By Lemma~\ref{lem:small-out-witness}, the set $K$ contains
a set $C$ with $x\in C$, $\lvert C\rvert\le h$, and
$\delta^+(D[C])\ge k_1$. Put $E=C\cap Q$. The set $E$ is nonempty, is contained
in $Q\setminus Z$, and has size at most $h$. Since both $P$ and $C$ satisfy the
minimum-outdegree condition, so does their union. Since $P\cup Q=V(D)$,
$P\cap Q=\varnothing$, and $C\subseteq V(D)$, we have $C\setminus Q\subseteq P$;
therefore $P\cup C=P\cup(C\cap Q)=P\cup E$. Lemma~\ref{lem:protective-set}
ensures that $Q\setminus E$
is nonempty and satisfies $\delta^0(D[Q\setminus E])\ge k_2$. Thus
$(P\cup E,Q\setminus E)$ is another required partition with a larger first
part, contradicting the choice of $P$. Hence $K=P$.

When the algorithm considers the set $Z$ constructed above, it obtains
$P_Z=P$ and $Q_Z=Q$, and therefore accepts. This proves completeness.

\emph{Running time.}
For $n=\lvert V(D)\rvert$, the number of candidate sets $Z$ is $O(n^q)$ because
$q$ depends only on the fixed parameters. Lemma~\ref{lem:maximum-out-core}
computes each $P_Z$ in $O(n^2)$ time, and the semidegree test has the same
bound. The total running time is $O(n^{q+2})$. Thus the algorithm is polynomial
for fixed $k_1,k_2$; if these integers are part of the input, the same analysis
only gives an XP algorithm parameterized by $(k_1,k_2)$.
\end{proof}

\subsection{Algorithm for $(\delta^0\ge k_1,\delta^0\ge k_2)$-partition}
\label{sec:fixed-semidegree-partitions}

\begin{theorem}\label{thm:fixed-semidegree-partition}
For every fixed pair of integers $k_1,k_2\ge 2$, Algorithm 3 decides whether a given semicomplete digraph
$D$ admits a partition $V(D)=V_1\cup V_2$ such that
\(
\delta^0(D[V_1])\ge k_1 \quad\text{and}\quad \delta^0(D[V_2])\ge k_2,
\)
and constructs such a partition whenever one exists. Moreover, writing
$C=2k_1(2k_1+1)+2k_2(2k_2+1)$, its running time is $O(n^{2C+5})$, where
$n=\lvert V(D)\rvert$.
\end{theorem}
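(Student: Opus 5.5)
The plan is to follow the template of Theorems~\ref{thm:main} and~\ref{thm:semi-out-degree-partition}. Fix a hypothetical feasible partition $(V_1,V_2)$ and show that a bounded amount of guessed information determines a partition that a polynomial procedure recovers. \emph{Soundness} is trivial, since every output is checked directly, so the work is in completeness and in the running time.

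\emph{Certificates.} For $i=1,2$, let $T_i^+=\{v\in V_i:d^+_{V_i}(v)=k_i\}$ and $T_i^-=\{v\in V_i:d^-_{V_i}(v)=k_i\}$. By Lemma~\ref{lem:few-low-degree}, $\lvert T_i^\pm\rvert\le 2k_i+1$, so the tight neighbourhoods
\[
A_i=\bigcup_{v\in T_i^+}N^+_{V_i}(v)\cup\bigcup_{v\in T_i^-}N^-_{V_i}(v)
\]
satisfy $\lvert A_1\rvert+\lvert A_2\rvert\le C$. By the argument in the completeness part of Theorem~\ref{thm:main}, a vertex of $V_i\setminus A_i$ can leave $V_i$ without destroying $\delta^0(D[V_i])\ge k_i$. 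The algorithm enumerates the labelled pairs $(A_1,A_2)$, which costs $O(n^{C})$. The partition is chosen extremal, say with $\lvert V_1\rvert$ minimum, so that every $x\in V_1\setminus A_1$ violates the semidegree condition if it is moved into $V_2$. This is the analogue of the key restriction $d^-_Y(x)<b$ used in Theorem~\ref{thm:main}.

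\emph{Universal colourings and recolouring.} Unlike Theorem~\ref{thm:main}, a one-sided closure no longer works, because both parts carry in- and out-conditions. The plan instead is as follows.
\begin{enumerate}
\item Build, via Lemma~\ref{lem:universal-coloring} (the $\varepsilon$-biased construction over $\operatorname{GF}(2^r)$), a deterministic $t$-universal family of colourings $\chi:V(D)\to\{1,2\}$ with $t\le C$ and size $O(n^{C+c})$ for a small constant $c$.
\item For each guessed $(A_1,A_2)$ and each $\chi$ in the family, start from the colouring that agrees with $\chi$ off $A_1\cup A_2$ and forces $A_i$ into class $i$.
\item Run a \emph{monotone} recolouring: repeatedly move a non-anchored vertex whose semidegree in class $1$ is below $k_1$ into class $2$. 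This is the peeling of Lemma~\ref{lem:maximum-out-core}, carried out for $\delta^0$.
\item Finally test whether class $2$ satisfies $\delta^0\ge k_2$.
\end{enumerate}
Monotonicity bounds each run by $O(n)$ moves, each costing $O(n^2)$. Counting $n^{C}$ guesses, $n^{C+c}$ colourings and $O(n^3)$ per run gives a bound of the required form $O(n^{2C+5})$.

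\emph{Main obstacle.} The hardest step is showing that, for the correct guess, some colouring in the family makes the peeling end \emph{exactly} at $V_1$. The peeling never removes a vertex of $V_1$ provided the initial class $1$ contains $V_1$. It also removes every vertex of $V_2$ that started in class $1$, by the extremal restriction. The difficulty is that vertices of $V_1$ that $\chi$ initially places in class $2$ are never recovered, since the process only moves vertices from class $1$ to class $2$.

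I would first try to show that only a bounded set of \emph{witness} vertices, of size at most $C$, must be coloured correctly. The tool for this is Lemma~\ref{lem:small-out-witness} applied in both directions, together with the protective-set Lemma~\ref{lem:protective-set}. Once that set is coloured correctly, a closure in the style of Definition~\ref{def:indegree-closure} should absorb the rest of $V_1$ before the peeling starts. Universality of the family then guarantees that the correct restriction on this bounded set is realised by some $\chi$. Whether the witness set can really be bounded by $C$, rather than by something larger, is the point I am least sure of. This step would need the most care, possibly by adding a preliminary closure step that grows class $1$ before peeling.
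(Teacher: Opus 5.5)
There is a genuine gap at exactly the step you flag, and the design around it does not work as stated.

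\textbf{Where it fails.} Your recolouring only moves vertices from class~$1$ to class~$2$. So a run can end at $V_1$ only if the initial class~$1$ already contains all of $V_1$. A family that is universal on only $t\le C$ coordinates cannot force this for a part of linear size. Your suggested repair, a bounded witness set plus a closure absorbing the rest of $V_1$, is not argued, and it meets the same two-sided obstruction you yourself note for closures.

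The claim that peeling removes every vertex of $V_2$ starting in class~$1$ is also unjustified. With $\lvert V_1\rvert$ minimum, the extremal restriction concerns the degrees of vertices of $V_1\setminus A_1$ into $V_2$, not the degrees of vertices of $V_2$ into $V_1$. A vertex $y\in V_2\setminus A_2$ with $k_1$ in- and out-neighbours in $V_1$ survives peeling, because $V_1\cup\{y\}$ still has $\delta^0\ge k_1$. Since $A_2$ only protects $V_2$ against a single deletion, several such survivors can make the final test on class~$2$ fail.

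Finally, you analyse your own algorithm rather than Algorithm~3, and the exponent count does not match: for $t\le C$, Lemma~\ref{lem:universal-coloring} gives only $O(n^2)$ colourings.

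\textbf{The missing idea.} Recolour in \emph{both} directions, and make correctly coloured vertices immune from being bad.
\begin{itemize}
\item Lemma~\ref{lem:small-support} gives $Q_i\subseteq V_i^*$ with $\lvert Q_i\rvert\le c(k_i)\lceil\log(n+1)\rceil$ such that every vertex of $V_i^*$ has at least $k_i$ in- and out-neighbours in $Q_i$.
\item Take $t=C\lceil\log(n+1)\rceil$. This must be logarithmic, not constant, because tournaments can have domination number of order $\log n$. The family then contains a $\chi_0$ agreeing with the target on $Q_1\cup Q_2$.
\item Inductively, $Q_i$ stays in class~$i$, so no vertex lying in its target class is ever bad.
\item Each flip therefore sends a misplaced vertex to its target class. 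The number of misplaced vertices drops by one per step, both classes stay nonempty, and the run accepts within $n$ flips.
\end{itemize}
Here \emph{monotone} refers to this potential, not to the direction of moves. No guessing of the $A_i$ and no extremal choice is needed. The bound $O(n^{2C+5})$ is $16n^2\cdot 4^t<16n^2(2(n+1))^{2C}$ colourings, times $n$ flips, times $O(n^2)$ per state.

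\textbf{If you want to keep one-directional peeling.} The workable version follows the template of Algorithm~2 instead. Take $\lvert V_1\rvert$ \emph{maximum} and guess the protective set $Z_{V_2}(h)$ with $h=1+2k_1(2k_1+1)$, which is the size bound for an inclusion-minimal $\delta^0\ge k_1$ witness through a prescribed vertex. Then peel $V(D)\setminus Z$ to its maximum $\delta^0\ge k_1$ core. That proves a different algorithm correct, however, not the theorem as stated.
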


We first establish the auxiliary lemmas used by the algorithm.

\begin{lemma}\label{lem:one-sided-domination}
Let $G$ be a semicomplete digraph on $m\ge 1$ vertices. There is a set
$A\subseteq V(G)$ with $\lvert A\rvert\le \lceil\log(m+1)\rceil$ such that every vertex
$v\in V(G)\setminus A$ has an in-neighbor in $A$. Dually, there is a set $B$
satisfying the same upper bound such that every vertex outside $B$ has an
out-neighbor in $B$. Both sets can be found in polynomial time.
\end{lemma}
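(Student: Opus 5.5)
We use the classical greedy halving argument for dominating sets in tournaments. We build $A$ one vertex at a time, keeping a set $U$ of vertices not yet in $A$ and not yet having an in-neighbor in $A$.

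Start with $U_0=V(G)$ and $A=\varnothing$. At step $i$, if $U_i\neq\varnothing$, look at the semicomplete digraph $G[U_i]$. Its outdegrees sum to at least $\binom{|U_i|}{2}$, since every pair contributes at least one arc. Hence some vertex $u\in U_i$ has $d^+_{U_i}(u)\ge (|U_i|-1)/2$. Add $u$ to $A$ and set $U_{i+1}=U_i\setminus(\{u\}\cup N^+_{U_i}(u))$.

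Every vertex removed from $U_i$ is either put into $A$ or is an out-neighbor of a vertex of $A$, so it has an in-neighbor in $A$. Vertices outside $U_0$ never occur. Therefore, when the process stops with $U=\varnothing$, every vertex of $V(G)\setminus A$ has an in-neighbor in $A$.

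For the size bound, the step removes at least $1+(|U_i|-1)/2=(|U_i|+1)/2$ vertices, so
\[
|U_{i+1}|\le\frac{|U_i|-1}{2},\qquad\text{equivalently}\qquad |U_{i+1}|+1\le\frac{|U_i|+1}{2}.
\]
By induction, $|U_i|+1\le (m+1)/2^i$. Since $|U_i|+1$ is an integer that is at least $1$ while $U_i\neq\varnothing$, the process runs for at most $\lceil\log(m+1)\rceil$ steps. This is the step needing the most care: working with $|U_i|+1$ rather than $|U_i|$ is what yields exactly the bound $\lceil\log(m+1)\rceil$, not an off-by-one weaker one. Concretely, if step $i$ is executed then $U_{i-1}\ne\varnothing$, so $2\le |U_{i-1}|+1\le (m+1)/2^{i-1}$, which gives $2^i\le m+1$ and $i\le\log(m+1)$. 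Hence $|A|\le\lceil\log(m+1)\rceil$. (The case $m\ge1$ guarantees at least one step; the bound is still fine since $\lceil\log 2\rceil=1$.)

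Each step costs $O(m^2)$ with an adjacency matrix, and there are at most $m$ steps, so the procedure is polynomial.

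The dual set $B$ comes from applying the same argument to the converse digraph $\overleftarrow G$, which is again semicomplete. In $\overleftarrow G$ an in-neighbor is an out-neighbor in $G$, so every vertex outside $B$ has an out-neighbor in $B$, with the same size bound and running time.
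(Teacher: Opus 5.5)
Your proposal is correct and follows essentially the same argument as the paper. Both repeatedly pick a vertex of outdegree at least $(\lvert U\rvert-1)/2$ in the current semicomplete subdigraph, delete it together with its out-neighbors, and track $\lvert U_j\rvert+1\le (m+1)/2^j$; the dual set $B$ then comes from the converse digraph, and your explicit step count via $2^i\le m+1$ is just a more detailed version of the paper's remark that $U_j$ is empty for $j=\lceil\log(m+1)\rceil$.
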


\begin{proof}
For every nonempty $U\subseteq V(G)$, the semicompleteness of $G[U]$ gives
$\sum_{u\in U}d_{G[U]}^+(u)\ge \binom{\lvert U\rvert}{2}$. Hence some $x\in U$ satisfies
$d_{G[U]}^+(x)\ge (\lvert U\rvert-1)/2$.

Start with $U_0=V(G)$. Whenever $U_j$ is nonempty, choose such a vertex $x_j$
in $G[U_j]$, add $x_j$ to $A$, and put
$U_{j+1}=U_j\setminus(\{x_j\}\cup N_{G[U_j]}^+(x_j))$. Then
$\lvert U_{j+1}\rvert+1\le (\lvert U_j\rvert+1)/2$, and induction yields
$\lvert U_j\rvert+1\le (m+1)/2^j$. Thus $U_j$ is empty for
$j=\lceil\log(m+1)\rceil$. Every vertex outside $A$ was deleted as an
out-neighbor of a selected vertex, and therefore has an in-neighbor in $A$.
Applying the same construction after reversing all arcs gives $B$. Each step
only requires the computation of outdegrees in an induced subdigraph, so the
construction is polynomial-time.
\end{proof}

\begin{lemma}[Small support set]\label{lem:small-support}
Let $H$ be a semicomplete digraph on $s$ vertices with $\delta^0(H)\ge k$,
where $k\ge 1$, and put $L=\lceil\log(s+1)\rceil$. There is a set
$Q\subseteq V(H)$ that can be constructed in polynomial time such that
$\lvert Q\rvert\le 2k(2k+1)L$ and every $v\in V(H)$ has at least $k$ out-neighbors and
at least $k$ in-neighbors in $Q$.
\end{lemma}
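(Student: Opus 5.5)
We build the in-part and the out-part of $Q$ separately, each by $k$ rounds of Lemma~\ref{lem:one-sided-domination} followed by a small repair step. Then we take $Q$ to be their union. The target is a bound of $2(k+k^2)L$, which is at most $2k(2k+1)L$.

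\emph{In-part.} Put $W_0=V(H)$ and run rounds $i=1,\dots,k$. In round $i$, if $W_{i-1}$ is nonempty, apply Lemma~\ref{lem:one-sided-domination} to the semicomplete digraph $H[W_{i-1}]$. This gives a set $A_i\subseteq W_{i-1}$ such that every vertex of $W_{i-1}\setminus A_i$ has an in-neighbor in $A_i$. Since $|W_{i-1}|\le s$, we get $|A_i|\le L$. Then put $W_i=W_{i-1}\setminus A_i$. If some $W_{i-1}$ is empty, stop early.

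The sets $A_1,\dots,A_k$ are pairwise disjoint. Take any vertex $v\in V(H)\setminus(A_1\cup\dots\cup A_k)$. In the complete-run case, $v$ lies in every $W_{i-1}\setminus A_i$, so it has an in-neighbor in each $A_i$. These $k$ in-neighbors are distinct, so $v$ has at least $k$ in-neighbors in $A:=A_1\cup\dots\cup A_k$. In the early-stop case, every vertex already lies in $A$, so nothing remains to check here.

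\emph{Repair step.} For the at most $kL$ vertices $v\in A$, use $\delta^-(H)\ge k$ to choose $k$ in-neighbors of $v$, and let $A'$ be the union of all these choices. Then $|A'|\le k\cdot kL$. Set $Q^-=A\cup A'$. Every vertex of $H$ now has at least $k$ in-neighbors in $Q^-$: vertices outside $A$ by the round argument, and vertices in $A$ by the repair step. Moreover $|Q^-|\le kL+k^2L$.

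\emph{Out-part and conclusion.} Apply the same construction to the converse digraph $\overleftarrow H$. Equivalently, use the dual set $B$ from Lemma~\ref{lem:one-sided-domination} and $\delta^+(H)\ge k$. This yields $Q^+$ with $|Q^+|\le (k+k^2)L$, such that every vertex has at least $k$ out-neighbors in $Q^+$. Then $Q=Q^-\cup Q^+$ satisfies
\[
|Q|\le 2(k+k^2)L\le 2k(2k+1)L,
\]
and enlarging a set only increases these neighbor counts.

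Every step consists of at most $2k$ polynomial-time calls to Lemma~\ref{lem:one-sided-domination} plus neighborhood scans, so $Q$ is constructed in polynomial time.

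The only delicate point is making sure the $k$ in-neighbors obtained from the rounds are distinct, and that the vertices selected into the dominating sets are covered. Disjointness of the $A_i$ handles the first. The repair step, which costs only $k^2L$ because $|A|\le kL$, handles the second.
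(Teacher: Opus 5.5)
Your proof is correct and follows the paper's argument: $k$ rounds of Lemma~\ref{lem:one-sided-domination} on the shrinking remainder give disjoint sets that supply $k$ distinct in-neighbors (dually, out-neighbors) to every vertex outside them, and then the at most $2kL$ selected vertices are padded with $k$ chosen neighbors each. The only difference is bookkeeping: the paper pads every $p\in P=A\cup B$ in both directions, getting $(2k+1)\lvert P\rvert$, whereas you pad $A$ only with in-neighbors and $B$ only with out-neighbors, which gives the slightly sharper bound $2k(k+1)L$.
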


\begin{proof}
We first construct a set $A$ that supplies at least $k$ in-neighbors to every
vertex outside it. Set $A_0=\varnothing$. For $j=1,\ldots,k$, let
$R_j=V(H)\setminus A_{j-1}$. If $R_j=\varnothing$, define $S_j=\varnothing$;
otherwise apply Lemma~\ref{lem:one-sided-domination} to $H[R_j]$ and obtain a
set $S_j\subseteq R_j$ such that every vertex of $R_j\setminus S_j$ has an
in-neighbor in $S_j$. Put $A_j=A_{j-1}\cup S_j$.

The sets $S_1,\ldots,S_k$ are pairwise disjoint and each has size at most $L$.
If $v\notin A_j$, then for every $1\le \ell\le j$ the vertex $v$ belongs to
$R_\ell\setminus S_\ell$, so it has an in-neighbor in $S_\ell$. Consequently
$d_{A_j}^-(v)\ge j$. With $A=A_k$, we have $\lvert A\rvert\le kL$ and
$d_A^-(v)\ge k$ for every $v\notin A$.

Applying the same construction to the reverse digraph gives a set $B$ with
$\lvert B\rvert\le kL$ and $d_B^+(v)\ge k$ for every $v\notin B$. Let $P=A\cup B$.
Then $\lvert P\rvert\le 2kL$, and every vertex outside $P$ has at least $k$ in-neighbors
and at least $k$ out-neighbors in $P$.

For each $p\in P$, the assumption $\delta^0(H)\ge k$ allows us to choose sets
$O_p\subseteq N_H^+(p)$ and $I_p\subseteq N_H^-(p)$ with
$\lvert O_p\rvert=\lvert I_p\rvert=k$. Define
$Q=P\cup\bigcup_{p\in P}(O_p\cup I_p)$. Vertices outside $P$ already receive
the required support from $P$, while each $p\in P$ receives it from $I_p$ and
$O_p$. Finally, $\lvert Q\rvert\le(2k+1)\lvert P\rvert\le 2k(2k+1)L$. All choices are explicit and
can be made in polynomial time.
\end{proof}

\begin{lemma}[Universal coloring family]\label{lem:universal-coloring}
For all positive integers $n$ and $t$, one can deterministically construct a
finite indexed multiset
\(
\mathcal U_{n,t}=(\chi_\alpha)_{\alpha\in I},\quad
\chi_\alpha:[n]\to\{0,1\},
\)
where equal colorings may occur with different indices and are counted with
multiplicity. Uniform choice from $\mathcal U_{n,t}$ means uniform choice of an
index $\alpha\in I$. The multiset satisfies the following properties:
\begin{enumerate}
    \item for every $T\subseteq[n]$ with $\lvert T\rvert\le t$, the restrictions
    $\{\chi_\alpha|_T:\alpha\in I\}$ contain every map $T\to\{0,1\}$;
    \item $\lvert \mathcal U_{n,t}\rvert\le 16n^2\cdot 4^t$;
    \item $\mathcal U_{n,t}$ can be constructed in time polynomial in $n\,2^t$.
\end{enumerate}
\end{lemma}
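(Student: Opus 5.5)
Our plan is to build $\mathcal U_{n,t}$ as a small-bias sample space in the style of the Alon--Goldreich--H{\aa}stad--Peralta ``powering'' construction over $\mathbb F=\operatorname{GF}(2^r)$, and then derive $t$-universality from the Walsh--Fourier identity recorded in Section~\ref{section 2}. Concretely, we choose $r=\lceil\log(n\cdot 2^{t+1})\rceil$ roughly, so that $2^r\le 4n2^t$, find a monic irreducible polynomial $f$ of degree $r$ over $\mathbb F_2$ deterministically, and represent $\mathbb F\cong\mathbb F_2[Z]/(f)$. The index set is $I=\mathbb F\times\mathbb F$, and for $\alpha=(x,y)$ we set $\chi_\alpha(i)=\lambda(x^{i}y)$, where $\lambda$ is a fixed nonzero $\mathbb F_2$-linear functional (e.g.\ the first coordinate in the polynomial basis). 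Then $|I|=2^{2r}\le 16n^2 4^t$, giving property~2.

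The key bias estimate: for a nonempty $R\subseteq[n]$, $\sum_{i\in R}\chi_\alpha(i)=\lambda(p_R(x)\,y)$ with $p_R(X)=\sum_{i\in R}X^i$, a nonzero polynomial of degree at most $n$. If $p_R(x)\neq0$, then as $y$ ranges over $\mathbb F$ the product $p_R(x)y$ ranges bijectively over $\mathbb F$, and since $\ker\lambda$ has size $2^{r-1}$ the character $\psi_R$ averages to $0$. If $p_R(x)=0$, which happens for at most $n$ values of $x$, the average is $1$. Hence $|\mathbb E\psi_R|\le n/2^r=:\varepsilon$, so the family is $\varepsilon$-biased.

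For universality, fix $T$ with $|T|\le t$ and $\eta:T\to\{0,1\}$. Taking expectations in the Walsh--Fourier identity, the term $R=\varnothing$ contributes $2^{-|T|}$ and the remaining $2^{|T|}-1$ terms each have absolute value at most $2^{-|T|}\varepsilon$. Thus $\Pr[\chi|_T=\eta]\ge 2^{-|T|}(1-(2^{|T|}-1)\varepsilon)>0$ as soon as $\varepsilon<2^{-t}$, i.e.\ $2^r>n2^t$; our choice of $r$ ensures this. Positive probability means some index realizes $\eta$, which gives property~1.

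The step we expect to be the main obstacle is the deterministic construction in time polynomial in $n2^t$, specifically finding an irreducible $f$ of degree $r$. Since $2^r=O(n2^t)$, we can afford to enumerate all $2^r$ monic degree-$r$ polynomials and test each for irreducibility (for instance by trial division by all lower-degree polynomials, or via $\gcd(f,Z^{2^{j}}-Z)$ for $j\le r/2$); existence is guaranteed by the M\"obius count $\frac1r\sum_{d\mid r}\mu(d)2^{r/d}>0$, which is where the Möbius function enters. After that, computing all $2^{2r}$ colorings, each requiring $n$ field operations, costs $O(n\,4^r\operatorname{poly}(r))$, which is polynomial in $n2^t$, giving property~3. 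The exact constant bookkeeping (choosing $r$ so that both $2^r>n2^t$ and $4^r\le16n^24^t$) will be deferred to the appendix calculations.
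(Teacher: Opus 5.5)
Your proposal is correct and is essentially the paper's proof. Both build an AGHP-type small-bias space $\chi_{z,y}(i)=\lambda(y\,p(z))$ over $\operatorname{GF}(2^r)$, bound the bias by $n/q$ via root counting, obtain positive extension probability from the Walsh--Fourier expansion, and find the field by brute-force enumeration with the M\"obius count guaranteeing existence. The only difference is the encoding: you use the plain powering code $z^i$ (degree at most $n$), while the paper packs $r$ coordinates into each field element (degree at most $\lceil n/r\rceil-1$) and then weakens that bound back to $n/q$ anyway.
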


\begin{proof}
Set $\varepsilon=2^{-t-1}$ and let $q=2^r$ be the smallest power of $2$
satisfying $q\ge n/\varepsilon$; thus $q<2n/\varepsilon$. We use a field $F$
of order $q$, together with a fixed basis of $F$ over $\mathbb F_2$ and a fixed
nonzero $\mathbb F_2$-linear map $\lambda:F\to\mathbb F_2$.

For completeness, such a field can be constructed deterministically in time
polynomial in $q$. Enumerate the monic binary polynomials of degree $r$ and
test each candidate for divisibility by every monic polynomial of degree $d$
with $1\le d\le\lfloor r/2\rfloor$. If a monic polynomial of degree $r$ is
reducible, then one of its nonconstant factors has degree at most
$\lfloor r/2\rfloor$. Hence a degree-$r$ candidate that passes all these tests
is irreducible. Such a candidate exists because the number of
monic irreducible binary polynomials of degree $r$ is
$r^{-1}\sum_{d\mid r}\mu(d)2^{r/d}>0$. This is immediate for $r=1,2$; for
$r\ge3$, every divisor $d>1$ gives a positive integer $r/d\le\lfloor r/2\rfloor$,
so the absolute contribution of the remaining terms is at most
$\sum_{j=1}^{\lfloor r/2\rfloor}2^j<2^{r/2+1}<2^r$. The enumeration and trial
divisions therefore take $q^{O(1)}$ time.

Let $m=\lceil n/r\rceil$. Pad each vector $x\in\mathbb F_2^n$ with zeros to
length $mr$, group its coordinates into blocks of length $r$, and use the
chosen basis to regard the blocks as elements
$a_0(x),\ldots,a_{m-1}(x)$ of $F$. This gives an injective
$\mathbb F_2$-linear map from $\mathbb F_2^n$ to $F^m$. Associate with $x$ the
polynomial $p_x(Z)=\sum_{j=0}^{m-1}a_j(x)Z^j$. If $x\ne0$, then $p_x$ is a
nonzero polynomial of degree at most $m-1$.

For each $(z,y)\in F^2$, define $\chi_{z,y}:[n]\to\{0,1\}$ by
$\chi_{z,y}(i)=\lambda(yp_{e_i}(z))$, where $e_i$ is the $i$th standard basis
vector of $\mathbb F_2^n$, and let
$\mathcal U_{n,t}=(\chi_{z,y})_{(z,y)\in F^2}$. Thus the multiset has $q^2$
indexed members, with all repetitions retained, and uniform averaging over it
is the same as uniform averaging over $(z,y)\in F^2$.

Fix a nonempty set $R\subseteq[n]$, put $x=\sum_{i\in R}e_i$, and write
$Z_x=\{z\in F:p_x(z)=0\}$. The complete averaging calculation is given in
Appendix~\ref{app:universal-coloring-calculations},
Subsection~\ref{app:small-bias-calculation}; see in particular
\eqref{eq:app-small-bias-identity}--\eqref{eq:app-small-bias-bound}. It yields
\begin{equation}\label{eq:small-bias-main}
\left\lvert \mathbb E_{\chi\in\mathcal U_{n,t}}(-1)^{\sum_{i\in R}\chi(i)}\right\rvert
\le\varepsilon.
\end{equation}

Now fix $T\subseteq[n]$, put $s=\lvert T\rvert\le t$, and fix
$\eta:T\to\{0,1\}$. The detailed Fourier expansion of the indicator of
$\chi|_T=\eta$ and the subsequent term-by-term estimate appear in
Appendix~\ref{app:universal-coloring-calculations},
Subsection~\ref{app:extension-probability-calculation}; see
\eqref{eq:app-extension-probability}. Using \eqref{eq:small-bias-main}, that
calculation gives
\begin{equation}\label{eq:extension-probability-main}
\Pr_{\chi\in\mathcal U_{n,t}}(\chi|_T=\eta)>0.
\end{equation}
Hence at least one member of the indexed multiset extends $\eta$, proving the
universal-coloring property.

Finally, $\lvert \mathcal U_{n,t}\rvert=q^2<4n^2/\varepsilon^2=16n^2\cdot 4^t$. Field
operations, polynomial evaluations, and the generation of all $q^2$
colorings take time polynomial in $q$ and $n$; since $q<2^{t+2}n$, this is
polynomial in $n\,2^t$.
\end{proof}

\medskip

\begin{algorithm}[H]
\caption{Deciding a $(\delta^0\ge k_1,\delta^0\ge k_2)$-partition}
\label{alg:zero-zero-partition}
\noindent\textbf{Input:} A semicomplete digraph $D$ given by its adjacency matrix, and fixed integers $k_1,k_2\ge2$.\par
\noindent\textbf{Output:} A $(\delta^0\ge k_1,\delta^0\ge k_2)$-partition of $D$, or a declaration that no such partition exists.
\begin{enumerate}[label=\textup{(\arabic*)},leftmargin=*,itemsep=2pt]
    \item Let $D$ have $n$ vertices, fix a bijection between $V(D)$ and $[n]$, and fix a total order on $V(D)$ for deterministic tie-breaking.
    \item Put $c(k):=2k(2k+1)$, $L:=\lceil\log(n+1)\rceil$, and $t:=(c(k_1)+c(k_2))L$; the support bound used to choose $t$ comes from Lemmas~\ref{lem:one-sided-domination} and~\ref{lem:small-support}.
    \item Construct $\mathcal U_{n,t}$ by Lemma~\ref{lem:universal-coloring}, identifying its colors $0,1$ with $1,2$, respectively.
    \item For each initial coloring $\chi:V(D)\to\{1,2\}$ in this family, set $W_i:=\{v:\chi(v)=i\}$. A vertex $v\in W_i$ is \emph{bad} if $d_{W_i}^+(v)<k_i$ or $d_{W_i}^-(v)<k_i$.
    \item At the initial state and after every recoloring, first test whether $W_1$ and $W_2$ are both nonempty and satisfy $\delta^0(D[W_i])\ge k_i$ for $i=1,2$; if so, output this partition.
    \item If the test fails and either $n$ recoloring steps have already been performed in the current run or there is no bad vertex, terminate that run and continue with the next initial coloring.
    \item Otherwise choose the first bad vertex in the fixed total order, change its color from $i$ to $3-i$, and repeat the current run from the test step.
    \item If all initial colorings are exhausted without acceptance, answer that no required partition exists.
\end{enumerate}
\end{algorithm}

\medskip

\begin{proof}[Proof of Theorem~\ref{thm:fixed-semidegree-partition}]
\emph{Soundness.}
The algorithm accepts only after directly verifying the two semidegree
conditions and the nonemptiness of both parts, so every output partition is
valid.

\emph{Completeness.}
Suppose that a valid partition
$V(D)=V_1^*\cup V_2^*$ exists, and define the target color
$\tau(v)=i$ for $v\in V_i^*$. Apply Lemma~\ref{lem:small-support} to each
$D[V_i^*]$. This gives sets $Q_i\subseteq V_i^*$ of size at most
$c(k_i)\lceil\log(\lvert V_i^*\rvert+1)\rceil\le c(k_i)L$ such that every vertex of
$V_i^*$ has at least $k_i$ in-neighbors and at least $k_i$ out-neighbors in
$Q_i$. Since $k_i\ge1$ and $V_i^*$ is nonempty, this also implies
$Q_i\ne\varnothing$. Hence $\lvert Q_1\cup Q_2\rvert\le t$, and
Lemma~\ref{lem:universal-coloring} supplies an initial coloring $\chi_0$
satisfying $\chi_0(q)=i$ for every $q\in Q_i$ and $i\in\{1,2\}$.

Consider the run beginning with $\chi_0$. We claim inductively that every
vertex of $Q_i$ retains color $i$ throughout the run. The claim holds
initially. If it holds before a recoloring, then $Q_i\subseteq W_i$, so every
vertex $v\in V_i^*$ whose current color is $i$ satisfies
$d_{W_i}^+(v)\ge d_{Q_i}^+(v)\ge k_i$ and
$d_{W_i}^-(v)\ge d_{Q_i}^-(v)\ge k_i$. Such a vertex is not bad. In
particular, no vertex of $Q_1\cup Q_2$ is recolored, and the invariant
persists.

It follows that every selected bad vertex has a current color different from
its target color. Since there are only two colors, recoloring that vertex
changes it precisely to its target color. Thus the potential
$\Phi(\chi)=\lvert \{v\in V(D):\chi(v)\ne\tau(v)\}\rvert$ decreases by exactly one at
every recoloring, and no correctly colored vertex is ever recolored.
Consequently this run performs at most $n$ recolorings before reaching the
target partition. Moreover, the invariant keeps $Q_i\subseteq W_i$ for both
$i$, and each $Q_i$ is nonempty, so neither color class becomes empty. If at
some earlier point there is no bad vertex, the acceptance test succeeds rather
than terminating the run. Therefore the algorithm necessarily accepts this
run.

\emph{Running time.}
Let $C=c(k_1)+c(k_2)$. Since $t=CL$ and $2^L<2(n+1)$, we have
$4^t=(2^L)^{2C}<(2(n+1))^{2C}$. Lemma~\ref{lem:universal-coloring} therefore
produces at most $16n^2\cdot 4^t=n^{O_{k_1,k_2}(1)}$ initial colorings. Each run
performs at most $n$ recolorings, and a direct recomputation of all relevant
in- and outdegrees takes $O(n^2)$ time per state. The total running time is
$n^{O_{k_1,k_2}(1)}$, which is polynomial because $k_1$ and $k_2$ are fixed.
\end{proof}

\section{Remarks}\label{section 4}

The three algorithms share a bounded-certificate strategy.  Semicompleteness bounds the number of vertices of small indegree or outdegree, so a feasible partition can be encoded by a small exceptional set, a bounded core, or a logarithmic support set.  In the $(\delta^+\geq k_1,\delta^-\geq k_2)$ case, an extremal partition and a closure operation recover one part exactly.  In the $(\delta^+\geq k_1,\delta^0\geq k_2)$ case, a protective set and the maximum out-core allow controlled vertex transfers.  In the $(\delta^0\geq k_1,\delta^0\geq k_2)$ case, a universal coloring family contains a coloring that agrees with the target colors on two support sets, after which deterministic recoloring reaches a valid partition.  Thus, structural properties of semicomplete digraphs are converted into explicit certificates that can be recovered in polynomial time, resolving the three open cases constructively.

The polynomial-time bounds require $k_1$ and $k_2$ to be fixed. If they are part of the input, the same algorithms yield only $\mathrm{XP}$ running times parameterized by $k_1+k_2$. Here, an algorithm is XP if, for every fixed parameter $k$, it runs in time $n^{f(k)}$ for some computable function $f$.

\begin{problem}
Let $D$ be a semicomplete digraph, and let integers $k_1,k_2\geq 2$ be part of the input.  Determine the parameterized complexity, with parameter $k_1+k_2$, of deciding whether $D$ has
\begin{itemize}
    \item a $(\delta^+\geq k_1,\delta^-\geq k_2)$-partition;
    \item a $(\delta^+\geq k_1,\delta^0\geq k_2)$-partition;
    \item a $(\delta^0\geq k_1,\delta^0\geq k_2)$-partition.
\end{itemize}
In particular, is each problem fixed-parameter tractable, or is at least one of them $\mathrm{W[1]}$-hard?
\end{problem}

\medskip

\textbf{Declaration on the use of generative AI}

The authors used generative AI tools { (ChatGPT 5.5 Pro and 5.5 Thinking)} to assist in numerical computation, checking proofs and improving exposition.

\appendix
\section{Detailed calculations for Lemma~\ref{lem:universal-coloring}}
\label{app:universal-coloring-calculations}

This appendix records the two calculations used in the proof of Lemma~\ref{lem:universal-coloring}. All expectations and probabilities below are taken uniformly over the indexed multiset $\mathcal U_{n,t}=(\chi_{z,y})_{(z,y)\in F^2}$, so repetitions are counted with their multiplicities.

\subsection{The small-bias average}
\label{app:small-bias-calculation}

Fix a nonempty set $R\subseteq[n]$ and put $x=\sum_{i\in R}e_i\in\mathbb F_2^n$. Since $R$ is nonempty, $x\ne0$, and hence $p_x$ is a nonzero polynomial of degree at most $m-1$. The maps $x\mapsto p_x$ and $\lambda$ are $\mathbb F_2$-linear, so for every $(z,y)\in F^2$ we have, in $\mathbb F_2$,
\[
\sum_{i\in R}\chi_{z,y}(i)=\sum_{i\in R}\lambda\bigl(yp_{e_i}(z)\bigr)=\lambda\left(y\sum_{i\in R}p_{e_i}(z)\right)=\lambda\left(yp_{\sum_{i\in R}e_i}(z)\right)=\lambda\bigl(yp_x(z)\bigr).
\]
Consequently,
\begin{equation}
\mathbb E_{\chi\in\mathcal U_{n,t}}(-1)^{\sum_{i\in R}\chi(i)}=\frac1{q^2}\sum_{z\in F}\sum_{y\in F}(-1)^{\sum_{i\in R}\chi_{z,y}(i)}=\frac1{q^2}\sum_{z\in F}\sum_{y\in F}(-1)^{\lambda(yp_x(z))}.
\label{eq:app-small-bias-double-sum}
\end{equation}

For fixed $z\in F$, set $S_z=\sum_{y\in F}(-1)^{\lambda(yp_x(z))}$. If $p_x(z)=0$, then every summand equals $1$, and therefore $S_z=q$. If $p_x(z)\ne0$, multiplication by $p_x(z)$ is a bijective $\mathbb F_2$-linear map on $F$, so $L_z(y)=\lambda(yp_x(z))$ is a nonzero $\mathbb F_2$-linear functional. Its kernel has dimension $r-1$ and therefore contains $q/2$ elements. Thus $L_z$ takes each value in $\mathbb F_2$ exactly $q/2$ times, and
\[
S_z=\frac q2(-1)^0+\frac q2(-1)^1=\frac q2-\frac q2=0.
\]

Writing $Z_x=\{z\in F:p_x(z)=0\}$ and substituting these two values of $S_z$ into \eqref{eq:app-small-bias-double-sum}, we obtain
\begin{equation}
\mathbb E_{\chi\in\mathcal U_{n,t}}(-1)^{\sum_{i\in R}\chi(i)}=\frac1{q^2}\sum_{z\in F}S_z=\frac1{q^2}\left(\sum_{z\in Z_x}q+\sum_{z\in F\setminus Z_x}0\right)=\frac{\lvert Z_x\rvert}{q}.
\label{eq:app-small-bias-identity}
\end{equation}
Because $p_x$ is nonzero and has degree at most $m-1$, it has at most $m-1$ roots. Moreover, $m=\lceil n/r\rceil$ with $r\ge1$, so $m-1\le n$, while the choice of $q$ gives $n/q\le\varepsilon$. Therefore
\begin{equation}
\left\lvert \mathbb E_{\chi\in\mathcal U_{n,t}}(-1)^{\sum_{i\in R}\chi(i)}\right\rvert=\frac{\lvert Z_x\rvert}{q}\le\frac{m-1}{q}\le\frac nq\le\varepsilon.
\label{eq:app-small-bias-bound}
\end{equation}

\subsection{The probability of extending a prescribed coloring}
\label{app:extension-probability-calculation}

Fix $T\subseteq[n]$, let $s=\lvert T\rvert\le t$, and fix a map $\eta:T\to\{0,1\}$. For each $i\in T$, the indicator of the equality $\chi(i)=\eta(i)$ is $\mathbf 1_{\{\chi(i)=\eta(i)\}}=(1+(-1)^{\chi(i)+\eta(i)})/2$. Multiplying these indicators and expanding the product over the choices of the nonconstant term gives the Walsh--Fourier expansion
\begin{align}
\mathbf 1_{\{\chi|_T=\eta\}}
&=\prod_{i\in T}\mathbf 1_{\{\chi(i)=\eta(i)\}}=2^{-s}\prod_{i\in T}\left(1+(-1)^{\chi(i)+\eta(i)}\right)\notag\\
&=2^{-s}\sum_{R\subseteq T}\prod_{i\in R}(-1)^{\chi(i)+\eta(i)}=2^{-s}\sum_{R\subseteq T}(-1)^{\sum_{i\in R}\eta(i)}(-1)^{\sum_{i\in R}\chi(i)}.
\label{eq:app-indicator-fourier}
\end{align}
Taking expectations in \eqref{eq:app-indicator-fourier} and using $\mathbb E[\mathbf 1_A]=\Pr(A)$ yields
\begin{align}
\Pr_{\chi\in\mathcal U_{n,t}}(\chi|_T=\eta)
&=2^{-s}\sum_{R\subseteq T}(-1)^{\sum_{i\in R}\eta(i)}\mathbb E_{\chi\in\mathcal U_{n,t}}(-1)^{\sum_{i\in R}\chi(i)}\notag\\
&=2^{-s}\left(1+\sum_{\varnothing\ne R\subseteq T}(-1)^{\sum_{i\in R}\eta(i)}\mathbb E_{\chi\in\mathcal U_{n,t}}(-1)^{\sum_{i\in R}\chi(i)}\right).
\label{eq:app-probability-before-bound}
\end{align}
The term $R=\varnothing$ contributes $1$. Applying the triangle inequality to the remaining terms and then using \eqref{eq:app-small-bias-bound} for every nonempty $R\subseteq T$ gives
\begin{align}
\Pr_{\chi\in\mathcal U_{n,t}}(\chi|_T=\eta)
&\ge 2^{-s}\left(1-\sum_{\varnothing\ne R\subseteq T}\left\lvert \mathbb E_{\chi\in\mathcal U_{n,t}}(-1)^{\sum_{i\in R}\chi(i)}\right\rvert\right)\notag\\
&\ge 2^{-s}\left(1-\sum_{\varnothing\ne R\subseteq T}\varepsilon\right)=2^{-s}\bigl(1-(2^s-1)\varepsilon\bigr).
\label{eq:app-probability-lower-bound}
\end{align}
There are $2^s-1$ nonempty subsets of $T$. Since $s\le t$ and $\varepsilon=2^{-t-1}$,
\[
(2^s-1)\varepsilon\le(2^t-1)2^{-t-1}=\frac12-2^{-t-1}<\frac12.
\]
Combining this with \eqref{eq:app-probability-lower-bound} gives
\begin{equation}
\Pr_{\chi\in\mathcal U_{n,t}}(\chi|_T=\eta)\ge 2^{-s}\bigl(1-(2^s-1)\varepsilon\bigr)>2^{-s-1}>0.
\label{eq:app-extension-probability}
\end{equation}

\end{spacing}
\end{document}